\title{Functions as proofs as processes}
\author{Emmanuel Beffara}
\institute{%
  Institut de Math\'{e}matiques de Luminy \\
  UMR6206, Universit\'{e} Aix-Marseille II \& CNRS}
\date{January 5, 2007}
\newcommand\LET[2]{{\operatorname{let}\,#1\mathrel{\operatorname{in}}#2}}
\newcommand\CASE[2]{%
  {\operatorname{case}\,#1\mathrel{\operatorname{of}}\{#2\}}}
\DeclareMathOperator\inj{inj}
\begin{document}
\maketitle

\begin{abstract}
  This paper presents a logical approach to the translation of functional
  calculi into concurrent process calculi.
  The starting point is a type system for the \pii-calculus closely related to
  linear logic.
  Decompositions of intuitionistic and classical logics into this system
  provide type-preserving translations of the \lam- and \lm-calculus, both for
  call-by-name and call-by-value evaluation strategies.
  Previously known encodings of the \lam-calculus are shown to correspond to
  particular cases of this logical embedding.
  The realisability interpretation of types in the \pii-calculus provides
  systematic soundness arguments for these translations and allows for the
  definition of type-safe extensions of functional calculi.
\end{abstract}

\section{Introduction} 

The \pii-calculus was introduced in the late 1980's as a core model of concurrent
computation, in the same way as the \lam-calculus is a core model of functional
computation.
Soon afterwards, Milner showed in the seminal paper
\emph{Functions as processes}~\cite{mil90:fun} that the \lam-calculus could be
precisely encoded into the \pii-calculus.
Around the same time, Girard introduced linear logic as a logic to study fine
properties of denotational models of intuitionistic logic.
Indeed the ideas behind it led to significant insights on the
structure and semantics of the \lam-calculus and functional computation, along
the lines of the \emph{functions as proofs} slogan.
It might not be a coincidence that these two contributions
appeared at the same time, and intuitions from one have been seen
in the other from the beginning.
Formal connections appeared some years later, giving formal ground to the
\emph{proofs as processes} idea, in particular in work by
Abramsky~\cite{abr93:cill,abr94:pap} and in a notable contribution by Bellin
and Scott as an encoding of proof nets in the \pii-calculus~\cite{bs94:pi-ll}.

The purpose of this paper is to present a formal way to make these pieces fit
together.
We use a recent version of the proofs-as-processes
approach~\cite{bef05:cmll,bef05:lrc} as a way to make a link between a form
of \pii-calculus (with more symmetry and expressiveness)
and a form of linear logic (with the significant difference that formulas have
arities).
In this framework, we adapt previous work by Danos, Joinet and Schellinx on
the translation of classical logic into linear
logic~\cite{djs95:lktq,djs96:decons}.
We show that, when considering each logic as a type system, we can extract a
family of typed translations of the \lam- and \lm-calculi into the \pii-calculus.
The now familiar duality~\cite{ch00:duality} between call-by-name and
call-by-value appears clearly in our system, moreover several previously
known translations are shown to fit in as particular instances of the general
technique.

Our type system for the \pii-calculus was developed by realisability as a logic
of behaviours of concurrent processes.
We show that this realisability construction can be used to prove properties
of the considered execution models of the \lam-calculus.
We also argue that realisability provides a way to introduce new constructs in
functional calculi while keeping the type system semantically correct.

\section{Framework} 

\subsection{The calculus} 

The concurrent calculus we use, hereafter named \pieq-calculus, is a
formulation of \pii-calculus with explicit fusions (à la Gardner and
Wischik~\cite{gw00:explicit}) with binding input and output.
We assume an infinite set $\Names$ of names, ranged over by the letters
$u,v,x,y,z$.
The calculus is generated by the following grammar:
  \begin{align*}
    &\text{actions:}& α ::={}
    & u(x_1…x_n) &&\text{input} \\
    &&& \bar u(x_1…x_n) &&\text{binding output} \\
    &\text{processes:}& p,q ::= {}
    & α.p,\quad !α.p &&\text{linear action, guarded replication} \\
    &&& 1, \quad p|q, \quad \new{x}p
      &&\text{inaction, parallel composition, hiding} \\
    &&& x\fuse y &&\text{name unification}
  \end{align*}

\begin{table}[t]
  Parallel composition and scoping:
  \begin{align*}
    p|q &≡ q|p &
    (p|q)|r &≡ p|(q|r) &
    p|1 &≡ p \\
    \new{x}\new{y}p &≡ \new{y}\new{x}p &
    \new{x}(p|q) &≡ p|\new{x}q \quad\text{if } x∉\fv(p) &
    \new{x}1 &≡ 1
  \end{align*}
  Equators:
  \begin{align*}
    1 &≡ x\fuse x &
    x\fuse y &≡ y\fuse x &
    x\fuse y | p[x/z] &≡ x\fuse y | p[y/z]
  \end{align*}
  Replication and reduction:
  \begin{align*}
    !α.p &≡ α.(p|!α.p) &
    u(\vec x).p | \bar u(\vec x).q &→ \new{\vec x}(p|q)
  \end{align*}
  \caption{Structural congruence and reduction.}
  \label{table:congr}
\end{table}

The operational semantics of the calculus is defined as a reduction relation
up to structural congruence, with the rules in table~\ref{table:congr}.
The reduction relation is the smallest relation $→$ that is closed under
structural congruence, parallel composition and hiding and that contains
$ u(\vec x).p | \bar u(\vec x).q → \new{\vec x}(p|q) $.
We consider a strong bisimilarity relation $≅$ whose precise definition (that
can be found in the appendix) uses a labelled transition system.
The point is that $≅$ is a congruent equivalence such that $p≅q$ implies that
for each reduction $p→p'$ there is a reduction $q→q'$ with $p'≅q'$.

We use this calculus instead of a more standard form of \pii-calculus because it
provides a clear distinction between synchronisation and name substitution.
It also allows a cleaner type system.
By combining binding actions and equators, we get usual non-binding actions
with their usual semantics, by defining
\[
  \bar u⟨x_1…x_n⟩ := \bar u(y_1…y_n).(x_1\fuse y_1|…|x_n\fuse y_n)
\]

\subsection{The type system: linear logic with arities} 

We assume a set $\Vars$ of type variables, ranged over by $X$ or $Y$.
The language of formulas is generated by the following grammar:
\[
  A,B ::= X \mid X^⊥ \mid A⊗B \mid A⅋B \mid ↓A \mid ↑A \mid !A \mid ?A
  \mid ∃X.A \mid ∀X.A
\]
Each variable is supposed to have a fixed arity.
Given an arity function $\ar:\Vars→\NN$, the arity of a formula is
defined as
\begin{gather*}
  \ar(\dagger A) := 1 \qquad\text{with } \dagger∈\{↑,↓,?,!\} \\
  \ar(∃X.A) := \ar(∀X.A) := \ar(A) \\
  \ar(A⊗B) := \ar(A⅋B) := \ar(A) + \ar(B)
\end{gather*}
The dual (or linear negation) is the involution $(⋅)^⊥$ defined
as $X^{⊥⊥}:=X$ and
\begin{align*}
  (A ⊗ B)^⊥ &:= A^⊥ ⅋ B^⊥ &
  (↓A)^⊥ &:= ↑(A^⊥) \\
  (∀X.A)^⊥ &:= ∃X.(A^⊥) &
  (!A)^⊥ &:= ?(A^⊥)
\end{align*}

A type $Γ$ is a sequence $\vec x_1:A_1,…,\vec x_n:A_n$ where each $A_i$ is a
formula and each $\vec x_i$ is a sequence of names of length $\ar(A_i)$.
All the names occurring in all the $\vec x_i$ must be distinct.
$↑Γ$ denotes a sequent where all formulas have the form $↑A$ or $?A$, and
$?Γ$ denotes a sequent where all formulas have the form $?A$.
A typing judgement is written $p⊢Γ$, where $p$ is a process and $Γ$ is a
type.
A process $p$ has type $Γ$ if $p⊢Γ$ is derivable by the rules of
table~\ref{table:typing}.
\begin{table}[p]
  Axiom and cut:
  \[
    \begin{prooftree}
      \Infer0{ u_1\fuse v_1|…|u_k\fuse v_k ⊢ \vec u:X^⊥, \vec v:X }
    \end{prooftree}
  \qquad
    \begin{prooftree}
      \Hypo{ p ⊢ Γ, \vec x:A }
      \Hypo{ q ⊢ \vec x:A^⊥, Δ }
      \Infer2{ \new{\vec x}(p|q) ⊢ Γ, Δ }
    \end{prooftree}
  \]
  Multiplicatives:
  \[
    \begin{prooftree}
      \Hypo{ p ⊢ Γ, \vec x:A }
      \Hypo{ q ⊢ Δ, \vec y:B }
      \Infer2{ p|q ⊢ Γ, Δ, \vec x\vec y:A⊗B }
    \end{prooftree}
  \qquad
    \begin{prooftree}
      \Hypo{ p ⊢ Γ, \vec x:A, \vec y:B }
      \Infer1{ p ⊢ Γ, \vec x\vec y:A⅋B }
    \end{prooftree}
  \]
  Actions:
  \[ 
    \begin{prooftree}
      \Hypo{ p ⊢ Γ, \vec x:A }
      \Infer1{ \bar u(\vec x).p ⊢ Γ, u:↑A }
    \end{prooftree}
  \quad
    \begin{prooftree}
      \Hypo{ p ⊢ Γ, \vec x:A }
      \Infer1{ \bar u(\vec x).p ⊢ Γ, u:?A }
    \end{prooftree}
  \quad
    \begin{prooftree}
      \Hypo { p ⊢ ↑Γ, \vec x:A }
      \Infer1{ u(\vec x).p ⊢ ↑Γ, u:↓A }
    \end{prooftree}
  \quad
    \begin{prooftree}
      \Hypo{ p ⊢ ?Γ, \vec x:A }
      \Infer1{ !u(\vec x).p ⊢ ?Γ, u:!A }
    \end{prooftree}
  \]
  Exchange, contraction and weakening:
  \[
    \begin{prooftree}
      \Hypo{ p ⊢ Γ, \vec x:A, \vec y:B, Δ }
      \Infer1{ p ⊢ Γ, \vec y:B, \vec x:A, Δ }
    \end{prooftree}
  \qquad
    \begin{prooftree}
      \Hypo{ p ⊢ Γ, u:?A, v:?A }
      \Infer1{ p[w/u,v] ⊢ Γ, w:?A }
    \end{prooftree}
  \qquad
    \begin{prooftree}
      \Hypo{ p ⊢ Γ }
      \Infer1{ p ⊢ Γ, u:?A }
    \end{prooftree}
  \]
  Quantifiers:
  \[
    \begin{prooftree}
      \Hypo{ p ⊢ Γ, \vec x:A }
      \Hypo{ X∉\fv(Γ) }
      \Infer2{ p ⊢ Γ, \vec x:∀X.A }
    \end{prooftree}
  \qquad
    \begin{prooftree}
      \Hypo{ p ⊢ Γ, \vec x:A[B/X] }
      \Hypo{ \ar(B)=\ar(X) }
      \Infer2{ p ⊢ Γ, \vec x:∃X.A }
    \end{prooftree}
  \]
  \caption{Typing rules for the \pieq-calculus.}
  \label{table:typing}
\end{table}

We call $\LLa$ (for linear logic with arities) this logical system.
The inference rules are those of multiplicative-exponential linear logic
(MELL), extended with the linear modalities $↑$ and $↓$.
The main difference is in the rule for the existential quantifier: $∃X.A$ can
be deduced from $A[B/X]$ only when $X$ and $B$ have the same arity.
As a consequence, although the language of MELL is a subset of our language of
types, provability of a sequent $Γ$ in MELL is not equivalent to provability
of $Γ$ in $\LLa$.

\subsection{Second-order \lm-calculus} 

Our model of functional computation is the \lm-calculus~\cite{par92:lm}.
We assume an infinite set of \lam-variables ranged over by $x,y$ and an infinite
set of μ-variables ranged over by $α,β$.
Terms are generated by the following grammar:
\[
  M,N ::= x \mid λx.M \mid (M)N \mid μα[β]M
\]
Thus we consider the version of the calculus where $μα$ and $[β]$ cannot
appear separately.
The language of types is minimal second-order logic, i.e.
\[
  A,B ::= X \mid A→B \mid ∀X.A
\]
A typing judgement has the form $Γ⊢M:A\midΔ$ where $Γ$ is a sequence of type
assignments $x:A$ for distinct \lam-variables and $Δ$ is a sequence of type
assignments $α:A$ for distinct μ-variables.
The typing rules are given in table~\ref{table:lm}.
The intuitionistic fragment, i.e. system F, is the fragment of this calculus
where $μα[β]$ is never used and where the $Δ$ part is always empty.

\begin{table}
  Intuitionistic rules:
  \begin{gather*}
    \begin{prooftree}
      \Infer0{ Γ, x:A ⊢ x:A \mid Δ }
    \end{prooftree}
  \qquad
    \begin{prooftree}
      \Hypo{ Γ, x:A ⊢ M:B \mid Δ }
      \Infer1{ Γ ⊢ λx.M:A→B \mid Δ }
    \end{prooftree}
  \\
    \begin{prooftree}
      \Hypo{ Γ ⊢ M:A→B \mid Δ }
      \Hypo{ Γ ⊢ N:A \mid Δ }
      \Infer2{ Γ ⊢ (M)N:B \mid Δ }
    \end{prooftree}
  \end{gather*}
  Quantifiers:
  \[
    \begin{prooftree}
      \Hypo{ Γ ⊢ M:A \mid Δ }
      \Hypo{ X∉\fv(Γ,Δ) }
      \Infer2{ Γ ⊢ M:∀X.A \mid Δ }
    \end{prooftree}
  \qquad
    \begin{prooftree}
      \Hypo{ Γ ⊢ M:∀X.A \mid Δ }
      \Infer1{ Γ ⊢ M:A[B/X] \mid Δ }
    \end{prooftree}
  \]
  Control:
  \[
    \begin{prooftree}
      \Hypo{ Γ ⊢ M:B \mid α:A, β:B, Δ }
      \Infer1{ Γ ⊢ μα[β]M:A \mid β:B, Δ }
    \end{prooftree}
  \]
  \caption{Typing rules for the \lm-calculus.}
  \label{table:lm}
\end{table}

\section{Simply typed \lam-calculus and head linear reduction} 

The basis of linear logic is the decomposition of intuitionistic implication
$A→B$ into an linear implication and an exponential modality, as $!A⊸B$.
The idea is that linear implication $A⊸B=A^⊥⅋B$ is
the actual implication, while the modalities $!A$ and $?A$ control
weakening and contraction.
In this section, we describe the operational meaning of this embedding.
\begin{definition}
  Let $\LJ_0$ be the language of formulas generated by variables and $→$ as
  the only connective.
  The translation $A^0$ of a formula $A$ is defined as
  \begin{align*}
    X^0 &:= X &
    (A→B)^0 &:= !A^0⊸B^0
  \end{align*}
  where each variable of $\LJ_0$ is mapped to a variable of arity $1$ in
  $\LLa$.
\end{definition}

Here propositional variables are considered as base types of arity 1.
A functional type $A_1→\cdots→A_n→X$ is thus translated into a formula of
arity $n+1$.
The translation of formulas naturally induces a translation of type
derivations.
\begin{definition}
  The translation of a simply typed \lam-term $M$ at type $A$ on channels $\vec
  y$ (with $|\vec y|=\ar(A^0)$) is the process $⟦M⟧^A\vec y$ defined as
  \begin{align*}
    ⟦x⟧^A\vec y &:= \bar x⟨\vec y⟩ \\
    ⟦λx.M⟧^{A→B}x\vec y &:= ⟦M⟧^B\vec y \\
    ⟦(M)N⟧^B\vec y &:= \new{x}(⟦M⟧^{A→B}x\vec y | !x(\vec z).⟦N⟧^A\vec z)
  \end{align*}
\end{definition}
The soundness and faithfulness of this translation are easily checked:
\begin{proposition}
  A judgement $x_1:A_1,…,x_n:A_n⊢M:B$ is derivable in $\LJ_0$ if and
  only if $⟦M⟧^B\vec y⊢x_1:?(A_1^0)^⊥,…,x_n:?(A_n^0)^⊥,\vec y:B^0$
  is derivable in $\LLa$.
\end{proposition}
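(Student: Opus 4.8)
The plan is to establish both implications by a single induction that follows the syntax-directed shape of the translation, matching each clause of $⟦\cdot⟧$ against the rules of table~\ref{table:typing}. For soundness ($\Rightarrow$) I would induct on the derivation of $x_1:A_1,…,x_n:A_n⊢M:B$ in $\LJ_0$, and for faithfulness ($\Leftarrow$) on the structure of $M$, whose outermost constructor fixes the outermost form of $⟦M⟧^B\vec y$ and thereby constrains any $\LLa$ derivation of its type. Throughout, write $Γ^0$ for $x_1:?(A_1^0)^⊥,…,x_n:?(A_n^0)^⊥$, so the target reads $⟦M⟧^B\vec y⊢Γ^0,\vec y:B^0$; the key structural fact is that every formula of $Γ^0$ is a $?$-formula, which is exactly what will make weakening, contraction and promotion available at the points where they are needed.

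The three inductive cases proceed as follows. For a variable $M=x_j$, the process $\bar x_j⟨\vec y⟩$ unfolds to $\bar x_j(\vec w).(\vec y\fuse\vec w)$; the $?$-output rule reduces the goal to typing the body $\vec y\fuse\vec w$ as the generalised identity $⊢\vec w:(A_j^0)^⊥,\vec y:A_j^0$, after which weakening restores the unused hypotheses of $Γ^0$. For an abstraction $M=λx.N$ at type $A→B$ the translation is inert, $⟦λx.N⟧^{A→B}x\vec y=⟦N⟧^B\vec y$, so a single $⅋$ rule turns the induction hypothesis $⟦N⟧^B\vec y⊢Γ^0,x:?(A^0)^⊥,\vec y:B^0$ into $⟦N⟧^B\vec y⊢Γ^0,(x\vec y):?(A^0)^⊥⅋B^0$, using $(A→B)^0=?(A^0)^⊥⅋B^0$. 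For an application $M=(M')N'$ at type $B$, promotion applies to the induction hypothesis $⟦N'⟧^A\vec z⊢Γ^0,\vec z:A^0$ precisely because $Γ^0$ is a $?$-context, giving $!x(\vec z).⟦N'⟧^A\vec z⊢Γ^0,x:!A^0$; inverting the top $⅋$ in the hypothesis for $M'$ (a routine $⅋$-invertibility) gives $⟦M'⟧^{A→B}x\vec y⊢Γ^0,x:?(A^0)^⊥,\vec y:B^0$; a cut on $x$, whose two faces $?(A^0)^⊥$ and $!A^0$ are dual, then yields $Γ^0,Γ^0,\vec y:B^0$, and contracting the duplicated $?$-context down to a single $Γ^0$ produces the claim, matching $\new{x}(⟦M'⟧^{A→B}x\vec y|!x(\vec z).⟦N'⟧^A\vec z)$. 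The converse direction simply reads these steps backwards: the outermost form of $⟦M⟧^B\vec y$ selects the last term constructor, and inverting the corresponding rules — using invertibility of $⅋$ and the fact that a cut against a promoted $!A^0$ must decompose as above — supplies premises to which the induction hypothesis applies.

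The one step I expect to be genuinely delicate is the generalised identity invoked in the variable case. When $A_j$ is a higher type, $(A_j^0)^⊥$ contains $!$-formulas, whose identity is a priori the exponential copycat (a replicated forwarder), not the bare fusion $\vec y\fuse\vec w$ produced by $\bar x_j⟨\vec y⟩$. This is where the explicit-fusion semantics of the \pieq-calculus carries the argument: a fusion identifies its two names at every type at once, so the equator is bisimilar to the copycat and inhabits the same behaviour, and the matching must be justified through stability of typing under $≅$ rather than by a naive syntactic derivation of the identity. Once this is in place the rest is bookkeeping: checking that the context duplicated by each cut is always a $?$-context so that contraction is legal, that weakening covers the unused hypotheses in the axiom case, and that the inversions in the converse direction are forced by the process shape; with these checks the two inductions close routinely.
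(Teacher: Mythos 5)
Your overall induction is exactly the intended argument (the paper itself omits the proof, remarking only that soundness and faithfulness are "easily checked"): the abstraction case is a single $⅋$ rule, the application case is promotion of the argument (legitimate because the context $Γ^0$ consists only of $?$-formulas), a cut on $x$ whose faces $?(A^0)^⊥$ and $!A^0$ are dual, and contraction of the duplicated $?$-context; the converse is inversion driven by the shape of the translated process. All of that is correct, including the bookkeeping points you flag (weakening only ever has to introduce $?$-formulas, and the $⅋$-inversion is a harmless permutation lemma since translations are never bare equators).

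The genuine gap is in your resolution of the one step you yourself identify as delicate, the variable case. You propose to justify $\vec y\fuse\vec w ⊢ \vec y:A_j^0, \vec w:(A_j^0)^⊥$ at higher types by typing the exponential copycat instead and transferring along $≅$, "through stability of typing under $≅$". No such principle is available: the rules of table~\ref{table:typing} are syntax-directed on the process, there is no rule closing derivability under bisimilarity (nor even under structural congruence), and the proposition asserts derivability for the fusion process $\bar x_j⟨\vec y⟩$ itself, so typing a bisimilar forwarder establishes nothing about it. Indeed, under your reading of the axiom as atomic-only, the forward direction of the proposition would simply be \emph{false}: no rule can introduce $!A^0$ over a bare fusion, since promotion requires a guarded replication. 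The correct repair is to observe that in this system the axiom holds at arbitrary formulas, i.e.\ equators are typed as identities at every type — the paper uses exactly this later, deriving $u\fuse v ⊢ u:A[B/X]^⊥, v:A[B/X]$ in the translation of $∀$-elimination and $u\fuse x ⊢ x:(γA)^⊥, u:δA$ when $γ=δ$ — and this is what the realisability semantics licenses, since an equator realizes the identity behaviour at every interpretation. With the general axiom, the variable case is one axiom instance followed by the $?$-output rule and weakenings, and no appeal to bisimilarity is needed or possible.
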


Let us now study the operational meaning of the translation.
Remark that, up to structural congruence, redexes can be permuted
without affecting the translation, i.e. the translation captures
σ-equivalence~\cite{reg90:sigma}.
Subsequently, we get that τ-transitions in the translations correspond to what
is known as head linear reduction~\cite{dr03:pam}.
We briefly recall the definition of these two notions:
\begin{definition}
  σ-equivalence is the congruence over \lam-terms generated by
  \begin{align*}
    (λx.M)NP &=_σ (λx.(M)P)N &
    (λxy.M)N &=_σ λy.(λx.M)N
  \end{align*}
  with $x∉\fv(P)$ and $y∉\fv(N)$.
  Any \lam-term $M$ can be normalised as
  \[
    M =_σ λx_1…x_k(λy_1…y_n.(x)M_1…M_p)N_1…N_n
  \]
  Head linear reduction is the relation over σ-equivalence classes
  generated by
  \begin{multline*}
    λx_1…x_k.(λy_1…y_n.(y_i)M_1…M_p)N_1…N_n \\
    → λx_1…x_k.(λy_1…y_n.(N_i)M_1…M_p)N_1…N_n
  \end{multline*}
\end{definition}

\begin{proposition}
  For any simply typed \lam-term $Γ⊢M:A$, $⟦M⟧^A\vec y$ is bisimilar to $M$ for
  head linear reduction.
\end{proposition}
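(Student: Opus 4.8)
The plan is to establish a tight operational correspondence between the reduction steps of the $\pieq$-calculus process $⟦M⟧^A\vec y$ and the head linear reduction steps on the $σ$-equivalence class of $M$. The claim that the translation is \emph{bisimilar to $M$ for head linear reduction} should be read as: reductions $⟦M⟧^A\vec y → p'$ correspond exactly to head linear reduction steps $M → M'$ with $p' ≅ ⟦M'⟧^A\vec y$, and conversely. So I would structure the proof as a simulation in both directions, using the already-noted fact that $σ$-equivalent terms have congruent translations to make the correspondence well-defined on $σ$-classes.

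First I would put $M$ in the canonical head form $λx_1…x_k.(λy_1…y_n.(y_i)M_1…M_p)N_1…N_n$ provided by the preceding definition, and compute $⟦M⟧^A\vec y$ explicitly by unfolding the three clauses of the translation. The $λ$-abstractions merely rename output channels (the $⟦λx.M⟧$ clause discards the bound name into the argument slot), each application $(\,\cdot\,)N$ introduces a private name $x$, a replicated server $!x(\vec z).⟦N⟧^A\vec z$, and feeds $x$ into the head position. The head variable $y_i$ translates to an output $\bar{y_i}⟨\vec y⟩$. The key structural observation is that after unfolding, the head occurrence is in direct contact (across a restriction on the name bound to $N_i$) with exactly the replicated guard carrying the translation of $N_i$. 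I would check that this is the \emph{only} active redex available up to $≅$: the other servers $!x(\vec z).⟦M_j⟧$ and $!(\,\cdot\,).⟦N_j⟧$ remain guarded because their subject names are not yet exposed in output position.

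Next I would perform the single reduction. The head output $\bar{y_i}⟨\vec y⟩$, expanded via the macro $\bar u⟨\vec x⟩ := \bar u(\vec y).(\vec x\fuse\vec y)$, synchronises with the replicated input $!y_i(\vec z).⟦N_i⟧^{A}\vec z$. By the replication congruence $!α.p ≡ α.(p|!α.p)$ and the reduction rule $u(\vec x).p|\bar u(\vec x).q → \new{\vec x}(p|q)$, the server is copied and one copy fires, substituting (via the equators) the continuation channels into $⟦N_i⟧$. I would then verify, using the equator laws of table~\ref{table:congr} to propagate the name fusions and the scoping laws to garbage-collect the consumed restriction, that the resulting process is structurally congruent — hence bisimilar — to $⟦λx_1…x_k.(λy_1…y_n.(N_i)M_1…M_p)N_1…N_n⟧^A\vec y$, which is precisely the translation of the head linear reduct. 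The replication is essential here, since $N_i$ may be needed again if it is called more than once, matching the fact that head linear reduction substitutes a single occurrence while keeping the binding available.

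The main obstacle I expect is the bookkeeping of the \emph{converse} direction and the role of $σ$-equivalence: I must argue that $⟦M⟧^A\vec y$ admits no reductions other than the one simulating the head linear step, so that no spurious $τ$-transitions break the bisimulation. This requires showing that in the canonical form every redex is ``trapped'' under a replication guard until the head name is exposed, and that permuting redexes (the $σ$-rules) genuinely leaves the translation invariant up to $≅$ rather than merely up to reduction. I would handle this by an induction on the structure of the head form, leaning on the congruence property of $≅$ stated in the excerpt (that $p ≅ q$ implies matching reductions) to transport the correspondence along structural congruence, and on the $σ$-invariance remark to ensure the relation is well-defined on equivalence classes. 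Confirming that the candidate relation $\{(⟦M⟧^A\vec y, M)\}$ closed under these moves is an actual bisimulation — in particular that stuck processes correspond exactly to head-normal $σ$-classes — is where the real care is needed.
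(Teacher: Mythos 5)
Your proposal is correct and follows essentially the same route as the paper's own proof: use the σ-invariance of the translation to pass to the canonical head form, unfold the translation explicitly into a head output in parallel with replicated servers, observe that the unique possible τ-transition is the head output meeting the replicated server of the corresponding argument (so stuck processes match head-normal forms), and check that the reduct is structurally congruent to the translation of the head linear reduct. The delicate point you flag at the end is settled in the paper by the single observation that the head variable cannot be one of the private application names $u_i$, so a τ-transition exists exactly when the head is some $y_j$ bound by the inner abstractions.
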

\begin{proof}
  First note that for $M$ and $N$ of type $A$, if $M=_σN$ then
  $⟦M⟧^A\vec y≡⟦N⟧^A\vec y$, so we can consider terms up to σ-equivalence.
  Consider a typed term $Γ⊢M:A$.
  By σ-equivalence we assume that $M$ is written
  $λ\vec x.(λ\vec y.(x)\vec M)\vec N$ with $|\vec y|=|\vec N|$.
  Call $A_i$ the type of each $x_i$, $B_i$ the type of each $y_i$ and $N_i$
  (these are the same since $M$ is well typed), and call $C_i$ the type of
  each $M_i$.
  Thus we have $A=A_1…A_k→B$ and $x$ has type $C_1…C_p→B$.
  Then we have
  \[\textstyle
    ⟦M⟧^A\vec x\vec z =
    \new{\vec y\vec u}\bigl( \bar x⟨\vec u\vec z⟩
      \bigm| \prod_{i=1}^p ⟦u_i=M_i⟧^{C_i}
      \bigm| \prod_{j=1}^n ⟦y_j=N_j⟧^{B_j} \bigr)
  \]
  with $⟦x=T⟧^A:=!x(\vec y).⟦T⟧^A\vec y$.
  The only possible reduction in this process is $\bar x⟨\vec u\vec z⟩$
  interacting with one of the $⟦u_i=M_i⟧^{C_i}$ or $⟦y_j=N_j⟧^{B_j}$.
  By construction $x$ cannot be one of the $u_i$, so $⟦M⟧^A\vec x\vec z$ has a
  τ-transition if and only if $x=y_j$ for some $j$.
  In this case, we can remark that the following reduction holds:
  \[
    \bar y_j⟨\vec u\vec z⟩ | ⟦y_j=N_j⟧^{B_j} →
    ⟦N_j⟧^{B_j}\vec u\vec z | ⟦y_j=N_j⟧^{B_j}
  \]
  Putting this reduction in context, we get that the reduct of
  $⟦M⟧^A\vec x\vec z$, up to structural congruence, is
  $⟦λ\vec x.(λ\vec y.(N_j)\vec M)\vec N⟧^A\vec x\vec z$.
  Therefore, τ-transitions in translations of \lam-terms strictly correspond to
  head linear reductions in the terms.
\end{proof}

Interestingly, this translation was first described by Hyland and Ong as a
syntax for strategies in a game semantics of PCF~\cite{ho95:dialogue}, thus
with different (but clearly related) arguments.

\section{System F and modal translations} 

The translation presented above is remarkably light.
However, the arity of the translation of a term depends on its type, and as a
consequence polymorphism in the style of system~F does not hold.
Modal translations~\cite{djs95:lktq} are a generalisation of the standard
embedding of intuitionistic logic into linear logic, which allow full
polymorphism by providing a type-independent (and type-safe) translation.
\begin{definition}
  A generalised modality is a word $γ$ over $\{↑,↓,?,!\}$.
  The dual of $γ$ is the modality $\bar{γ}$ such that $(γX)^⊥=\bar{γ}X^⊥$.
  A modal translation of $\LK$ into $\LLa$ is defined by a pair $(γ,δ)$ of
  generalised modalities.
  The translation $A^*$ of a formula $A$ is defined as
  \begin{align*}
    X^* &:= X &
    (A→B)^* &:= γ(A^*)⊸δ(B^*) &
    (∀X.A)^* &:= ∀X.(A^*)
  \end{align*}
  where each variable of $\LK$ is mapped to a variable of arity $2$ in $\LLa$.
  For $Γ=\{x_i:A_i\}_{1≤i≤n}$, define $Γ^*:=\{x_i:A_i^*\}_i$
  and $Γ^{*⊥}:=\{x_i:(A_i^*)^⊥\}_i$.
  For a generalised modality $γ$, define $γΓ:=\{x_i:γA_i\}_i$.
  A type $Γ⊢A|Δ$ is translated at a channel $u$ into the type
  $\bar{γ}Γ^{*⊥},u:δA^*,δΔ^*$.
  A modal translation $(γ,δ)$ is valid if $Γ⊢A\midΔ$ holds
  if and only if $⊢_{\LLa}\bar{γ}Γ^{*⊥},δA^*,δΔ^*$ holds.
\end{definition}

An important fact needs to be stressed: in the source
language $\LK$, any variable can be substituted by any formula.
On the other hand, in the target language $\LLa$,
a variable can only be substituted by a formula of the same arity.
Note that a translation commutes with substitution, i.e.
$(A[B/X])^*=A^*[B^*/X]$, if and only if variables are preserved,
therefore any translation must assign sensible arities to variables.
For this substitution to be correct in any case, we must ensure that the
arity of $A^*$ is independent from $A$.
This condition is satisfied if and only if neither $γ$ nor $δ$ is empty, and
then $\ar(A^*)=2$ for any $A$.

\subsection{General translation} 

\begin{definition}
  Given a non-empty generalised modality $γ$ and names $u$ and $\vec x$,
  define the \emph{protocol} $γu(\vec x).p$ as
  $↓u(\vec x).p := u(\vec x).p$,
  $↑u(\vec x).p := ?u(\vec x).p := \bar u(\vec x).p$,
  and inductively ${γ\dagger}u(\vec x).p:=γu(v).{\dagger}v(\vec x).p$ for
  a fresh name $v$.
  In the case of the empty modality $ε$, let
  $εu(x).p:=p[u/x]$, and
  $εu(\vec x).p$ is undefined for $|\vec x|≠1$.
\end{definition}
Note that protocols are typed in the expected way:
$p⊢Γ,\vec x:A$ implies $γu(\vec x).p⊢Γ,u:γA$.
If $γ$ contains $!$ then the context must be $?Γ$, else if $γ$ contains $↓$
then the context must be $↑Γ$.
For a modal translation $(γ,δ)$ to be valid for classical logic, essentially
two conditions are required:
\begin{itemize}
\item
  It must be possible to apply weakening and contraction to formulas
  $\bar{γ}A$ and $δA$, i.e. $γ$ must start with $!$ and $δ$ must start with
  $?$.
\item
  For the application rule, it must be possible to deduce a common modality
  $ζ$ from $γ$ and $δ$, in a context of $\bar{γ}$ and $δ$ modalities, which
  essentially implies that one of $γ,δ$ must be a suffix of the other.
\end{itemize}
For a pair $(γ,δ)$ to be valid for intuitionistic logic, contraction and
weakening of $δ$ formulas is not required, and contexts only contain $\bar{γ}$
formulas.

Let $(γ,δ)$ be a modal translation for which these conditions are satisfied.
Let $Γ$ and $Δ$ be types where all formulas start with the modalities
$\bar{γ}$ or $δ$.
We can deduce the translation of the rules for $λ$ and $μ$ independently of
the modalities:
\[
  \begin{prooftree}
    \Hypo{ p ⊢ Γ, x:\bar{γ}A^⊥, v:δB }
    \Infer1{ p ⊢ Γ, xv:γA⊸δB }
    \Infer1{ δu(xv).p ⊢ Γ, u:δ(γA⊸δB) }
  \end{prooftree}
\qquad
  \begin{prooftree}
    \Hypo{ p ⊢ Γ, u:δB, α:δA, β:δB }
    \Infer1{ p[β/u] ⊢ Γ, α:δA, β:δB }
  \end{prooftree}
\]
Hence we get
\begin{align*}
  ⟦λx.M⟧u &:= δu(xv).⟦M⟧v &
  ⟦μα[β]M⟧α &:= ⟦M⟧β
\end{align*}
The formulation of the translation of $μα[β]M$ is valid since α-conversion can
be applied to the variable bound by $μ$.
The fact that $μα[β]$ does not modify the process in any other way stresses
the fact that the $μ$ binder is nothing more than a way to name conclusions of
a proof in the sequentialised syntax of \lam-calculus.

It is clear that the introduction rule for $∀$ is not affected by
the translation.
There is a slight difference for the elimination rule: the \lam-calculus is a
syntax for natural deduction with intro/elim, while our type
system for the \pii-calculus is a sequent calculus with only introduction rules
and an actual cut rule.
We can translate the elimination rule for $∀$ by using an extra cut and
axiom:
\begin{prooftree*}
  \Hypo{ p ⊢ Γ, u:∀X.A }
  \Hypo{ u\fuse v ⊢ u:A[B/X]^⊥, v:A[B/X] }
  \Infer1{ u\fuse v ⊢ u:∃X.A^⊥, v:A[B/X] }
  \Infer2{ \new{u}(p|u\fuse v) ⊢ Γ, v:A[B/X] }
\end{prooftree*}
By structural congruence we have $\new{u}(p|u\fuse v)≡p[v/u]$, hence we
can also accept the elimination rule itself in our type system.
For the application rule,
assume there is a generalised modality $ζ$ of which $γ$ and $δ$ are
suffixes, and set $γ'$ and $δ'$ such that $ζ=γ'γ=δ'δ$.
Then the translation of application is:
\begin{prooftree*}
  \small
        \Hypo{ p ⊢ Γ, v:δ(γA⊸δB) }
        \Hypo{ \bar{δ}v⟨xu⟩ ⊢ v:\bar{δ}(γA⊗(δB)^⊥), x:(γA)^⊥, u:δB }
      \Infer2{ \new{v}(p|\bar{δ}v⟨xu⟩) ⊢ Γ, x:(γA)^⊥, u:δB }
    \Infer1{ \bar{γ}'z(x).\new{v}(p|\bar{δ}v⟨xu⟩) ⊢ Γ, z:(ζA)^⊥, u:δB }
      \Hypo{ q ⊢ Δ, w:δA }
    \Infer1{ δ'z(w).q ⊢ Δ, z:ζA }
  \Infer2{ \new{z}(\bar{γ}'z(x).\new{v}(p|\bar{δ}v⟨xu⟩)|δ'z(w).q) ⊢ Γ, Δ, u:δB }
\end{prooftree*}
As explained above, one of $γ,δ$ must be a suffix of the other, so
one of $γ',δ'$ must be empty.
We thus have two cases for the axiom, depending on which one it is:
\[
  \begin{prooftree}
    \Infer0{ u'\fuse u ⊢ u':(δA)^⊥, u:δA }
    \Infer1{ \bar{δ}'x⟨u⟩ ⊢ x:(γA)^⊥, u:δA }
  \end{prooftree}
\qquad
  \begin{prooftree}
    \Infer0{ x\fuse x' ⊢ x:(γA)^⊥, x':γA }
    \Infer1{ γ'u⟨x⟩ ⊢ x:(γA)^⊥, u:δA }
  \end{prooftree}
\]
When both $γ'$ and $δ'$ are empty, these cases collapse into $u\fuse
x⊢x:(γA)^⊥,u:δA$.

\begin{definition}
  Let $(γ,δ)$ be pair of non-empty generalised modalities.
  The translation $⟦M⟧^{γδ}u$ of a \lam-term $M$ is defined inductively by the
  rules of table~\ref{table:general}.
\end{definition}

\begin{table}
  \centering
  $γ$ and $δ$ are given, $γ',δ'$ are such that $γ'γ=δ'δ$.
  \begin{align*}
    ⟦x⟧^{γδ}u &:=
      \begin{cases}
        u\fuse x &\text{if } γ=δ \\
        \bar{δ}'x⟨u⟩ &\text{if } γ=δ'δ \\
        γ'u⟨x⟩ &\text{if } δ=γ'γ
      \end{cases} \\
    ⟦λx.M⟧^{γδ}u &:= δu(xv).⟦M⟧^{γδ}v \\
    ⟦(M)N⟧^{γδ}u &:= \new{z}\Bigl(
        \bar{γ}'z(x).\new{v}\bigl( ⟦M⟧^{γδ}v | \bar{δ}v⟨xu⟩ \bigr) |
        δ'z(w).⟦N⟧^{γδ}w
      \Bigr) \\
    ⟦μα[β]M⟧^{γδ}α &:= ⟦M⟧^{γδ}β
  \end{align*}
  \caption{General case translation of λμ into π.}
  \label{table:general}
\end{table}

\begin{theorem}
  Let $(γ,δ)$ be a valid modal translation.
  For any \lm-term $M$, $Γ⊢M:A\midΔ$ is derivable if and only if
  $⟦M⟧^{γδ}u⊢\bar{γ}Γ^{*⊥},u:δA^*,δΔ^*$ is derivable.
\end{theorem}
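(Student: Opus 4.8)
The plan is to prove the two implications separately. The forward direction (soundness) goes by induction on the derivation of $Γ⊢M:A\midΔ$, producing an \LLa-derivation of $⟦M⟧^{γδ}u⊢\bar{γ}Γ^{*⊥},u:δA^*,δΔ^*$; the backward direction (faithfulness) goes by induction on the structure of the term $M$, inverting the \LLa-typing rules to reconstruct a \lm-derivation. I use the validity of $(γ,δ)$ only through the two structural conditions it guarantees: $γ$ starts with $!$ and $δ$ starts with $?$, so that every context formula (which then begins with $?$) admits weakening and contraction; and one of $γ,δ$ is a suffix of the other, so that the common modality $ζ$ with $ζ=γ'γ=δ'δ$ exists, with one of $γ',δ'$ empty.

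For soundness, each \lm-typing rule is matched with the corresponding clause of table~\ref{table:general}, and the required derivations are precisely the fragments already displayed above. The axiom is handled by the three cases of the translation of a variable, selected according to whether $γ=δ$, $γ=δ'δ$ or $δ=γ'γ$; the abstraction rule by the protocol $δu(xv).⟦M⟧^{γδ}v$, whose typing follows from the remark that protocols are typed in the expected way together with the context condition on $δ$; the application rule by the large derivation tree, which consumes both induction hypotheses and the suffix condition; the $μ$ rule by the renaming $p[β/u]$; and the two $∀$ rules by the (unchanged) introduction and the cut/axiom encoding of elimination. The last case also needs that \LLa-typability is invariant under structural congruence, so that the step $\new{u}(p|u\fuse v)≡p[v/u]$ transports the derivation; this is a routine lemma by induction on the generating equations of table~\ref{table:congr}. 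The contraction and weakening needed to share the common context $Γ,Δ$ between the two premises of an application are available exactly because every formula of $\bar{γ}Γ^{*⊥},δΔ^*$ begins with $?$.

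For faithfulness the essential tool is a family of \emph{generation (inversion) lemmas} for \LLa: from a derivation of $γu(\vec x).p⊢Γ,u:γA$ one must recover a derivation of $p⊢Γ,\vec x:A$ (and symmetrically for $\bar{γ}$ and for the axiom-shaped processes $u\fuse x$, $\bar{δ}'x⟨u⟩$, $γ'u⟨x⟩$). Granting these, I proceed by cases on $M$. For $M=x$ the translated process is one of the three axiom forms, and inverting its typing forces the formula carried by $x$ to be $\bar{γ}A^{*⊥}$, i.e. $x:A∈Γ$, which yields the \lm-axiom. For $M=λx.N$, inverting the outer $δ$-protocol together with the $⅋$ and action rules exposes a typing of $⟦N⟧^{γδ}v$ to which the induction hypothesis applies, followed by the abstraction rule. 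For an application and for $μα[β]N$ one likewise peels off the administrative structure---a cut and the inner protocols in the first case, a renaming in the second---and applies the induction hypotheses before the application rule, respectively the control rule.

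The main obstacle is exactly these inversion lemmas, because the structural rules of \LLa---contraction and weakening on $?$-formulas, exchange, and the implicit appeals to structural congruence---can be interleaved with the principal rules and thereby obstruct a naive analysis of the last rule used. I would handle this by first proving a permutation result bringing every \LLa-derivation into a form in which contractions and weakenings sit below the principal rule to be inverted, so that the rule introducing the outermost modality of a protocol becomes identifiable; inverting a whole protocol $γu(\vec x).p$ then amounts to iterating the single-modality step along the word $γ$, which is routine once the four base cases $↓,↑,?,!$ and the context constraints they impose are checked. With these lemmas available, both the soundness assembly and the faithfulness induction reduce to bookkeeping, and the provability-level validity of $(γ,δ)$ is recovered as the term-erased special case.
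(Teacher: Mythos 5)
Your overall architecture coincides with the proof the paper intends (and leaves implicit): the forward direction is an induction on the \lm-derivation that assembles exactly the typed fragments displayed before the theorem — the protocol typing for $λ$, the $ζ=γ'γ=δ'δ$ tree for application, the renaming $p[β/u]$ for $μ$, the three axiom shapes for variables, with contraction and weakening on the shared context licensed by $γ$ starting with $!$ and $δ$ with $?$ — and the backward direction is a structural inversion of $\LLa$-derivations, which the paper does not spell out but for which your permutation-of-structural-rules programme is the standard completion. So this is essentially the paper's route, and your identification of the inversion lemmas as the real content of faithfulness is accurate. (You also inherit the paper's gloss that the $∀$-introduction rule ``is not affected'': strictly, the $\LLa$ rule applied to $u:δA^*$ yields $u:∀X.δA^*$ rather than the required $u:δ∀X.A^*$, so the quantifier must be inserted above the final $δ$-protocol inside the canonical derivation — a harmless commutation given the shape of the derivations your induction builds, but it deserves a sentence.)

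One concrete claim in your write-up is false, however: $\LLa$-typability is \emph{not} invariant under the structural congruence of table~\ref{table:congr}, so your ``routine lemma by induction on the generating equations'' cannot exist. The replication law $!α.p ≡ α.(p\,|\,!α.p)$ is a counterexample: $!u(\vec x).p$ is typed at $u:!A$ in a $?Γ$ context, whereas the unfolded process is an ordinary input prefix and can only be assigned $u:↓A$ by the action rules, so two congruent processes do not admit the same sequents; likewise $1$ has no typing rule of its own, so even $p|1≡p$ fails for strict typability. What your $∀$-elimination case actually needs is only the single instance $\new{u}(p\,|\,u\fuse v)≡p[v/u]$, and this should be proved directly as a renaming lemma on derivations (from $p⊢Γ,u:B$ conclude $p[v/u]⊢Γ,v:B$), which is exactly how the paper handles it when it says one can ``accept the elimination rule itself'' as a derived rule. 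With that repair — and with the same tacit assumption the paper makes throughout its displayed derivations, namely that the axiom $u\fuse v⊢\vec u:A^⊥,\vec v:A$ is available at all formulas and not only at atomic $X$, without which even the variable case of your soundness direction would not typecheck in the strict system of table~\ref{table:typing} — your proof goes through.
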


Actions in the \pii-calculus, in particular replications, are blocking.
As a consequence, in the standard semantics, there is no reduction
inside replications, so the execution of $⟦M⟧$ does not represent the full
β-reduction.
In the following sections, we give a detailed description of this execution.
As explained above, there are two cases, depending on which of $γ,δ$ is a
suffix of the other:

\begin{definition}
  A pair of generalised modalities $(γ,δ)$ is called left-handed if $δ$ is a
  suffix of $γ$.
  It is called right-handed if $γ$ is a suffix of $δ$.
\end{definition}

\begin{table}
  Classical call-by-name ($γ=!?$, $δ=?$):
  \begin{align*}
    ⟦x⟧u &= \bar x⟨u⟩ \\
    ⟦λx.M⟧u &= \bar{u}(xv).⟦M⟧v \\
    ⟦(M)N⟧u &= \new{v} \bigl( ⟦M⟧v |
      \oc v(xy).( \oc x(w).⟦N⟧w | y \fuse u )
      \bigr)
  \intertext{Intuitionistic call-by-name ($γ=!↓$, $δ=↓$):}
    ⟦x⟧u &= \bar x⟨u⟩ \\
    ⟦λx.M⟧u &= u(xv).⟦M⟧v \\
    ⟦(M)N⟧u &= \new{vx} \bigl(
      ⟦M⟧v | \oc x(w).⟦N⟧w
      | \bar{v}⟨xu⟩ \bigr)
  \intertext{Classical call-by-value ($γ=!$, $δ=?!$):}
    ⟦x⟧u &= \bar{u}⟨x⟩ \\
    ⟦λx.M⟧u &=
      \bar{u}(y).\oc y(xv).⟦M⟧v \\
    ⟦(M)N⟧u &= \new{w} \Bigl(
      \oc w(x).\new{v} \bigl(
        ⟦M⟧v | \oc v(w).\bar w⟨xu⟩ \bigr)
      \Bigm| ⟦N⟧w \Bigr)
  \intertext{Intuitionistic call-by-value ($γ=!$, $δ=!$):}
    ⟦x⟧u &= u\fuse x \\
    ⟦λx.M⟧u &=
      \oc u(xv).⟦M⟧v \\
    ⟦(M)N⟧u &= \new{vw} \bigl(
        ⟦M⟧v | ⟦N⟧w |
        \bar{v}⟨wu⟩ \bigr)
  \end{align*}
  \caption{Particular cases of translations.}
  \label{table:translations}
\end{table}

\subsection{Call-by-name} 

Here we consider the left-handed case, i.e. with $γ=δ'δ$ for some
non-empty $δ'$.
As a simplification we consider the case where $δ$ and $δ'$ are simple
modalities, one easily checks that the other cases
are not significantly different.
The validity constraints impose $δ'=!$, and $δ$ has to be $?$ for the
classical case.

To describe precisely the operation of translated terms, we introduce a new
form of term $♯M$ and define a continuation $K$ as $M_1…M_kα$ where $α$ is a
μ-variable and the $M_i$ are terms.
An executable is a pair $M*K$,
equivalence $≡$ and execution $→$ of executables are defined as
\begin{align*}
  (M)N * K &≡ M * N K &
  ♯M * K &→ M * K \\
  μα[β]M * K &≡ M[K/α] * β &
  λx.M * N K &→ M[♯N/x] * K
\end{align*}
The substitution $M[M_1…M_nα/β]$ is the substitution of every
subterm of the form $[β]N$ of $M$ by $[α](N)M_1…M_n$.
The translation of terms is extended to executables as
\begin{align*}
  ⟦α⟧u &:= α\fuse u &
  ⟦x=M⟧ &:= δ'x(u).⟦M⟧u \\
  ⟦MK⟧u &:= \new{vz}(\bar{δ}u⟨zv⟩ | ⟦z=M⟧ | ⟦K⟧v) &
  ⟦α=K⟧ &:= ⟦K⟧α \\
  ⟦♯M⟧u &:= \new{x}(⟦x⟧u | ⟦x=M⟧) &
  ⟦M*K⟧ &:= \new{u}(⟦M⟧u|⟦K⟧u)
\end{align*}

\begin{proposition}
  For any call-by-name executables $e_1$ and $e_2$,
  $e_1≡e_2$ implies $⟦e_1⟧≅⟦e_2⟧$ and
  $e_1→e_2$ if and only if $⟦e_1⟧→⟦e_2⟧$.
\end{proposition}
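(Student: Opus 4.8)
The plan is to reduce the whole statement to two substitution lemmas and then to a case analysis on the head of an executable. Throughout I will work up to $≅$ rather than $≡$: even the two reduction rules, once their head redex has fired, leave behind a spent replicated server $!x(w).⟦N⟧w$ with no remaining client, and such a guarded replication under a restriction is bisimilar — but not structurally congruent — to $1$. So the cleanup steps, and the sharing of replicated arguments, are inherently $≅$-level facts, which is why the statement phrases the equivalence part with $≅$.

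The two lemmas I will prove by induction on the term (extended to continuations and executables) are, for a fresh $x$ and any $μ$-variable $α$:
\[
  \new{x}\bigl(⟦M⟧u \mid ⟦x=N⟧\bigr) ≅ ⟦M[♯N/x]⟧u,
  \qquad
  \new{α}\bigl(⟦M⟧u \mid ⟦α=K⟧\bigr) ≅ ⟦M[K/α]⟧u.
\]
The second is mild: $[K/α]$ only rewrites the named subterms $[α]N$, and on the process side these are exactly the occurrences where $⟦M⟧u$ outputs on channel $α$, so the restriction $\new{α}(-\mid⟦K⟧α)$ plugs $⟦K⟧$ in once per occurrence with no sharing. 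The first lemma is the delicate one, because $M[♯N/x]$ produces one independent copy of the server $!x(w).⟦N⟧w$ per occurrence of $x$, whereas $\new{x}(⟦M⟧u\mid⟦x=N⟧)$ shares a single server among all occurrences; the inductive step therefore rests on idempotency of guarded replication up to $≅$, namely that a shared replicated server is bisimilar to any number of private copies. I expect this to be the main obstacle.

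Granting these, the equivalence part is short. For $(M)N*K ≡ M*NK$ I will unfold both translations: in the left-handed case $γ'=ε$, so $\bar γ'z(x).p=p[z/x]$ and $δ'z(w).⟦N⟧w=⟦z=N⟧$, and the application clause of table~\ref{table:general} collapses to $⟦(M)N⟧u=\new{zv}(⟦M⟧v\mid\bar δv⟨zu⟩\mid⟦z=N⟧)$. Both $⟦(M)N*K⟧$ and $⟦M*NK⟧$ then expand, under two restricted channels, to the parallel composition of $⟦M⟧$, a head output carrying those channels, $⟦z=N⟧$ and $⟦K⟧$; the two expansions differ only by which restricted channel carries $M$ and which carries $K$, so they are $α$-equivalent, hence $≡$. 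For $μα[β]M*K ≡ M[K/α]*β$ I unfold $⟦μα[β]M⟧u=⟦M[u/α]⟧β$, rename the restricted $u$ to $α$, and apply the continuation lemma to obtain $\new{α}(⟦M⟧β\mid⟦α=K⟧)≅⟦M[K/α]⟧β$, the final equality with $⟦M[K/α]*β⟧$ following from $\new{w}(p\mid β\fuse w)≡p[β/w]$.

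For the reduction part I treat the two rules and then argue completeness by exhausting the available synchronisations. Soundness of $♯M*K→M*K$ follows because $⟦♯M⟧u=\new{x}(\bar x⟨u⟩\mid!x(w).⟦M⟧w)$ has a single redex, whose firing yields $⟦M⟧u$ beside a now clientless server that is collected up to $≅$. Soundness of $λx.M*NK→M[♯N/x]*K$ is the genuine $β$-step: the abstraction protocol $δu(xv).⟦M⟧v$ synchronises with the head output $\bar δu⟨zv'⟩$ supplied by $⟦NK⟧u$, the ensuing equators fuse $x$ with $z$ and $v$ with $v'$, and after eliminating the restrictions one is left with $\new{v'}\bigl(\new{x}(⟦M⟧v'\mid⟦x=N⟧)\mid⟦K⟧v'\bigr)$, which the term-variable lemma identifies with $⟦M[♯N/x]*K⟧$. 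Completeness is where the blocking nature of replication, stressed just before the statement, does the work: using the $≡$-rules I first bring any executable to a head form $M*K$ with $M∈\{x,\,λx.M',\,♯N\}$; since every argument and every continuation component sits under a guard $!x(w).(-)$, no reduction can occur inside them, so the only $τ$-transitions of $⟦M*K⟧$ are exactly the head redex of a $♯$ or the synchronisation of a $λ$ against a non-empty continuation, which are precisely the images of the two executable rules, while a free head variable or a $λ$ facing a bare continuation $α$ leaves a deadlocked output corresponding to an executable normal form. Transferring reducts back along $≅$, which preserves reduction in both directions, then closes the ``if and only if''.
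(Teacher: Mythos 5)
Your proposal is correct in architecture and follows essentially the paper's own route: the same two substitution lemmas, the same case-by-case treatment of the $≡$-rules (the push rule by pure structural rearrangement, the $μ$-rule via the continuation lemma), the same single-redex analysis of the two reduction rules, and completeness via the observation that translations of terms and continuations have at most one transition, everything else being blocked under replication guards. The one methodological difference is how the substitution lemmas are established. You propose induction on terms, deferring the ``idempotency of guarded replication'' as the expected main obstacle; the paper instead proves the lemmas by a single global distribution argument: since $⟦x=M⟧$ (resp. $⟦K⟧α$) is the \emph{unique input} on its channel, every output on that channel can only interact with it, so up to $≅$ one replaces each output action $\bar{α}(\vec x).p$ by $\new{α'}(\bar{α}'(\vec x).p\mid ⟦K⟧α')$ with a private copy of the replicated server. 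This is exactly the replication fact you isolate, so your packaging would work, but note that the paper needs it (and the uniqueness-of-input observation that justifies it) in \emph{both} lemmas, not just the term one.

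That points to the two places where your sketch is concretely off. First, the continuation lemma is \emph{not} sharing-free: $μ$-variables admit contraction, so $α$ may have many output occurrences, all served by the single replicated $⟦K⟧α$ (this is also why the rule is classical-only — $\bar{δ}$ must start with $!$; intuitionistically it holds only when $μ$-variables are used linearly, a caveat you omit). Moreover, plugging is not the end of the story: after distribution, an occurrence $⟦μθ[α]N⟧θ=⟦N⟧α$ becomes $\new{α'}(⟦N⟧α'\mid ⟦K⟧α')=⟦N*K⟧$, and one must invoke the push rule to recognise this as $⟦μθ[β](N)M_1…M_k⟧θ$ — the defined substitution rewrites $[α]N$ into $[β](N)M_1…M_k$, so your ``no sharing, once per occurrence'' reading misses both the sharing and the syntactic reshaping. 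Second, your opening diagnosis of the garbage is wrong for the pop rule: the server $!x(w).⟦N⟧w$ left by $λx.M*NK$ is not clientless — it is precisely the mechanism implementing $[♯N/x]$ and is absorbed by the substitution lemma; what must be collected up to $≅$ in the classical case is the residual replicated input $\bar{δ}u⟨xw⟩$ at the restricted $u$ (the paper notes $u$ occurs nowhere else because continuation duplication happens only through the $μ$-rule), while in the $♯$-rule the discarded residue is the now clientless $δ'x(v).⟦M⟧v$. Neither slip breaks the overall argument, since both are repaired by the same distribution reasoning you already planned, but as stated those two steps would not go through.
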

\begin{proof}
  Remark that the translation $⟦M⟧u$ of a variable or an abstraction has
  exactly one transition, labelled by an action on $u$ or on a variable.
  Similarly, the translation $⟦K⟧u$ of a continuation either is an equator
  $u\fuseα$ or has a unique transition labelled by an action on $u$.
  $⟦♯M⟧u$ has a single transition to a process bisimilar to $⟦M⟧u$.
  Then the key of the proof is the remark that bindings correctly
  implement substitution up to bisimilarity, i.e.
  $\new{α}(⟦e⟧|⟦α=K⟧)≅⟦e[K/α]⟧$ for any fresh name $α$, and
  $\new{x}(⟦e⟧|⟦x=M⟧)≅⟦e[♯M/x⟧$ for any fresh name $x$.
  The rule for $μα[β]$ applies only in the classical case, then $\bar{δ}$
  starts with $!$ and continuations are replicable.
  Details can be found in the appendix.
\end{proof}

Executing a \lm-term simply means executing it on a continuation $α$
for a fresh variable $α$, since $⟦M⟧α≡⟦M*α⟧$.
Hence we can summarise this result as:
\begin{theorem}
  Left-handed translations implement call-by-name execution.
\end{theorem}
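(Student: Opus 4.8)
The plan is to read the theorem as a packaging of the preceding proposition, so that the only genuine content is the identification of the executable semantics $(M * K, ≡, →)$ with call-by-name evaluation. First I would argue that the executables form a stack-based abstract machine, in the style of Krivine's machine, for the \lm-calculus: the continuation $K = M_1 … M_k α$ is an evaluation stack, the marker $♯N$ records an argument pushed unevaluated, and the rule $λx.M * N K → M[♯N/x] * K$ is call-by-name β-reduction, substituting the argument as a thunk rather than evaluating it. The rule $(M)N * K ≡ M * N K$ merely unstacks an application, $μα[β]M * K ≡ M[K/α] * β$ is the usual continuation capture for the control operator, and $♯M * K → M * K$ forces a thunk once it reaches head position. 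The key point is the division of labour between the two relations: administrative reorganisation is carried by $≡$, while the genuinely computational steps, β-reduction and forcing, are carried by $→$; this is precisely the weak-head call-by-name strategy.

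Next I would lift this machine to the process level. The remark preceding the theorem gives $⟦M⟧α ≡ ⟦M * α⟧$, so running the process $⟦M⟧$ against a fresh continuation name $α$ coincides, up to structural congruence, with running the executable $M * α$. The proposition then supplies a tight operational correspondence: $e_1 ≡ e_2$ implies $⟦e_1⟧ ≅ ⟦e_2⟧$, and $e_1 → e_2$ holds if and only if $⟦e_1⟧ → ⟦e_2⟧$. The forward direction is completeness, each machine step being simulated by a process reduction, while the reverse direction is faithfulness, guaranteeing that the process performs no reduction that does not already come from a machine step. Combined with the fact that $≡$-steps map to strong bisimilarities, which preserve reductions up to $≅$, this means that the administrative reshuffling happens silently at the process level, whereas the $τ$-transitions of $⟦M⟧α$ are in exact bijection with the β- and forcing-steps of $M * α$.

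Putting these together, I would conclude that the reduction behaviour of the translated process mirrors, step for step, the call-by-name execution of $M$: structural equivalences on executables correspond to strong bisimilarities on processes, and genuine reductions correspond bijectively, so that no computation is lost and none is spuriously added. This is exactly the claim that left-handed translations implement call-by-name. I expect the main obstacle to be the first step rather than the last: the process-level statement follows mechanically from the proposition and the remark, but justifying that the executable rules compute call-by-name — in particular that the thunking discipline ($♯$ on arguments, with forcing only at the head) together with the continuation-capture rule for $μ$ yields the intended weak-head order and not some other evaluation strategy — is where the real verification lies.
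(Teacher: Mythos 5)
Your proposal is correct and follows essentially the same route as the paper: the theorem is indeed just a summary of the preceding proposition together with the remark $⟦M⟧α≡⟦M*α⟧$, and the paper likewise treats the executable semantics (a Krivine-style machine with thunks $♯N$ and continuation capture for $μ$) as the definitional embodiment of call-by-name, with all the technical work residing in the proposition's bisimulation argument. Your added observation that verifying the machine computes weak-head call-by-name is the real content, rather than the mechanical lifting through the proposition, is a fair and accurate reading of where the paper leaves matters implicit.
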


The case for $γ=!↓$ and $δ=↓$ is an adaptation of the standard $!A⊸B$
decomposition that allows polymorphism.
Operationally, it exactly corresponds to Milner's
translation~\cite{mil90:fun}.
The case for $γ=!?$ and $δ=?$ corresponds to the system known as LKT in
Danos-Joinet-Schellinx.
As far as we know, its operational counterpart in the \pii-calculus is new.
These particular translations are shown in table~\ref{table:translations}.
In the classical case, the application uses an equator $y\fuse u$ which is not
standard \pii-calculus, however it can be argued that replacing it by a forwarder
$!y(ab).\bar u⟨ab⟩$ does not affect the validity of the translation, although
the step-by-step operational description is a bit heavier to formulate.

\subsection{Call-by-value} 

We now consider the right-handed case, i.e. with $δ=γ'γ$.
As in the previous section, we assume without loss of generality that $γ$ is a
single modality, necessarily $!$ because of the validity constraints.
We now have two main choices for $γ'$, namely $?$ for the classical case and
$↑$ for the intuitionistic case.
We now have to distinguish values, terms and continuations:
\begin{align*}
  &\text{values}&
  V,W &:= x \mid λx.V \\
  &\text{terms}&
  M,N &:= V \mid (M)N \mid μα[K]M \mid V⋅W \\
  &\text{continuations}&
  K,L &:= α \mid K M^f \mid K V^a
\end{align*}
An executable is a pair $K*M$.
Equivalence and execution are defined as
\begin{align*}
  &&
  K M^f * V &→ K V^a * M \\
  K * (M)N &≡ K M^f * N &
  K W^a * V &→ K * V⋅W \\
  K * μα[L]M &≡ L * M[K/α] &
  K * λx.M⋅V &→ K * M[V/x]
\end{align*}
A continuation contains functions as unevaluated terms $M^f$
and arguments as values $V^a$, so arguments are evaluated first.
The terms $V⋅W$ and $μα[K]M$ are introduced to get a precise
bisimulation.
Translations are extended as
\begin{align*}
  ⟦V⟧u &:= γ'u(x).⟦x=V⟧ \\
  ⟦(M)N⟧u &:= \new{v}( ⟦v=uM^f⟧ | ⟦N⟧v ) \\
  ⟦V⋅W⟧u &:= \new{xy}( ⟦x=V⟧ | ⟦y=W⟧ | \bar{γ}x⟨yu⟩ ) \\
  ⟦x=y⟧ &:= x\fuse y \\
  ⟦x=λy.M⟧ &:= γx(yu).⟦M⟧u \\
  ⟦α=β⟧ &:= α\fuse β \\
  ⟦α=KM^f⟧ &:= \new{v}( \bar{γ'}α(x).\new{u}( ⟦M⟧u | \bar{δ}u⟨xv⟩ ) | ⟦v=K⟧ ) \\
  ⟦α=KV^a⟧ &:= \new{vx}( ⟦x=V⟧ | \bar{δ}α⟨xv⟩ | ⟦v=K⟧ ) \\
  ⟦μα[K]M⟧α &:= \new{β}( ⟦β=K⟧ | ⟦M⟧β ) \\
  ⟦K*M⟧ &:= \new{u}( ⟦u=K⟧ | ⟦M⟧u )
\end{align*}
\begin{proposition}
  For any call-by-value executables $e_1$ and $e_2$,
  $e_1≡e_2$ implies $⟦e_1⟧≅⟦e_2⟧$ and
  $e_1→e_2$ if and only if $⟦e_1⟧→⟦e_2⟧$.
\end{proposition}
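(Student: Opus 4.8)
The plan is to mirror the call-by-name argument, transporting the transition analysis and the substitution lemmas to the value-oriented translation. First I would record the elementary dynamics of each syntactic class: the translation $⟦V⟧u$ of a value has a single transition, an output on $u$ exposing the stored definition $⟦x=V⟧$; the translation $⟦u=K⟧$ of a continuation is either the equator $u\fuse α$ (when $K=α$) or carries a single transition, an input on $u$; and the leading action of $⟦V⋅W⟧u$ is the output $\bar{γ}x⟨yu⟩$ that fires the replicated definition of $V$. An executable rule only ever applies at a configuration $K*M$ in which, once the equivalences have exposed it, $M$ is a value or an application $V⋅W$; there the hidden name $u$ of $⟦K*M⟧=\new{u}(⟦u=K⟧|⟦M⟧u)$ carries exactly the complementary pair above, so the head of the translation has a single τ-transition. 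This determinism is what makes the correspondence an equivalence rather than a one-way simulation, and is what the biconditional ultimately rests on.

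The second ingredient is the pair of substitution lemmas stating that the binding discipline of the calculus realises substitution up to bisimilarity: for fresh names, $\new{x}(⟦e⟧|⟦x=V⟧)≅⟦e[V/x]⟧$ and $\new{α}(⟦e⟧|⟦α=K⟧)≅⟦e[K/α]⟧$. The variable-for-variable case $⟦x=y⟧=x\fuse y$ is discharged directly by the equator laws, so the real content is in the cases where $V$ is an abstraction or $K$ a non-trivial stack. Here the first genuinely call-by-value phenomenon appears: since only values are substituted for λ-variables, the definition $⟦x=V⟧$ sits under the replicated modality $γ=!$ and may be copied freely, and one must verify that every copy realises the same behaviour — this is exactly where bisimilarity, rather than mere structural congruence, is needed. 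The continuation lemma relies in the same way on $\bar{γ'}=!$ in the classical case, so that $⟦α=K⟧$ is replicated and a continuation may be used repeatedly; in the intuitionistic case $γ'=↑$ keeps continuations linear and, as in call-by-name, the rule for $μα[K]M$ is available only when $\bar{γ'}$ begins with $!$.

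With these in hand I would dispatch the two families of rules. For the equivalences, $K*(M)N≡KM^f*N$ is a structural-congruence computation: unfolding $⟦v=uM^f⟧$ and collapsing the equator $⟦w=u⟧=w\fuse u$ against $⟦u=K⟧$ identifies the two translations up to $≅$; and $K*μα[L]M≡L*M[K/α]$ follows from the continuation substitution lemma applied to $⟦μα[L]M⟧u$ in the context $\new{u}(⟦u=K⟧|\cdot)$. For the reductions I would check that the unique τ-transition of the left side lands, up to $≅$, on the translation of the right side: in $KM^f*V→KV^a*M$ the value's output meets the waiting input of $⟦u=KM^f⟧$, rebinding $V$ as $⟦x=V⟧$ and launching $⟦M⟧$; in $KW^a*V→K*V⋅W$ that output is consumed by the input of $⟦u=KW^a⟧$, assembling $⟦V⋅W⟧$; and in $K*λx.M⋅V→K*M[V/x]$ the output $\bar{γ}x⟨yu⟩$ fires the replicated abstraction of the function, after which the substitution lemma rewrites the residual as $⟦M[V/x]⟧$.

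The hard part is the forward direction, faithfulness. I expect the transition analysis of the first paragraph to already exclude spurious reductions: after normalising $e$ with the equivalences — which $≅$ respects — the hidden name $u$ carries a unique input/output pair and values have no reductions of their own, so the only τ-transition of $⟦e⟧$ is the one exhibited above and every process step is matched by a machine step. The delicate point is checking that once a replicated definition has fired the residual is still bisimilar to the translation of the reduct: the persistent $!$-guarded copies of value and classical-continuation definitions must be shown inert, i.e. bisimilar to their absence once the guarding name is hidden. This bookkeeping, together with the proofs of the two substitution lemmas, is what I would defer to the appendix; assembling the equivalence and reduction cases then yields the stated biconditional.
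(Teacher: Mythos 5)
Your proposal is correct and follows essentially the same route as the paper's proof: the same two substitution lemmas (proved by distributing the replicated bindings and collapsing equators such as $\new{x}(y\fuse x\mid⟦x=M⟧)≅⟦y=M⟧$), the same observation that translated terms and continuations carry at most one transition (which gives faithfulness, provided $γ'$ is non-empty), and the same case-by-case verification of the equivalence and reduction rules, including erasing by bisimilarity the inert $!$-guarded residuals on hidden names in the classical case. The bookkeeping you defer is precisely what the paper's appendix carries out.
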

\begin{proof}
  The proof follows the same principle as in call-by-name.
  The substitution lemma now states $\new{x}(⟦e⟧|⟦x=V⟧)≅⟦e[V/x]⟧)$ where $e$
  is an executable, $V$ is a value and $x$ is a \lam-variable; the same lemma for
  μ-variables and continuations also holds.
  We then remark that translations of terms and continuations always have at
  most one transition, and the correspondence with the operational semantics
  above is easily checked.
  Details can be found in the appendix.
\end{proof}

Given a fresh μ-variable $α$, once again we get $⟦M⟧α=⟦α*M⟧$, hence the
semantics above precisely describes the execution of translations of \lm-terms
in right-handed translations, which can be summarised as follows:
\begin{theorem}
  Right-handed translations implement call-by-value execution.
\end{theorem}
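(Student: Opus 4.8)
The plan is to read this theorem as a corollary of the preceding proposition, so that the only genuine work is to repackage the executable-level bisimulation as a statement about bare \lm-terms. First I would fix the meaning of ``implements call-by-value execution'': it asserts that the reduction graph of the process $⟦M⟧u$, taken up to strong bisimilarity $≅$, is in step-by-step correspondence with the call-by-value evaluation of $M$, with neither missing nor spurious process reductions.

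The first concrete step is the bridging identity $⟦M⟧α=⟦α*M⟧$ for a fresh μ-variable $α$, already noted just above the statement. It follows by unfolding $⟦K*M⟧=\new{u}(⟦u=K⟧|⟦M⟧u)$ at $K=α$, where $⟦u=α⟧=u\fuse α$, and then using the equator congruence rule $x\fuse y|p[x/z]≡x\fuse y|p[y/z]$ to rename $u$ to $α$ and erase the fresh restriction. Thus executing a term amounts to running the executable $α*M$, i.e. evaluating $M$ against the empty continuation $α$.

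The second step is to invoke the proposition, which I may assume: the map $e\mapsto⟦e⟧$ is a strong bisimulation between the executable system (with $≡$ and $→$) and its process image (with $≅$ and $→$). The forward implication ($e_1→e_2$ implies $⟦e_1⟧→⟦e_2⟧$) gives completeness, while the converse ``only if'' direction — that every transition of $⟦e_1⟧$ is forced by a matching machine step — gives soundness, ruling out administrative process reductions that do not reflect a genuine evaluation step. Combined with the bridging identity and the congruence of $≅$, this transports the bisimulation from $α*M$ to $⟦M⟧α$. It then remains to check that the executable machine started from $α*M$ is exactly the call-by-value strategy on $M$: I would define a projection from executables to \lm-terms that forgets the continuation stack and the administrative forms $V⋅W$, $μα[K]M$ and the $M^f$/$V^a$ markings, and verify that each $→$-step projects to a call-by-value β- or μ-step — in particular $K*λx.M⋅V→K*M[V/x]$ is the β-redex — whereas the $≡$-rules for $(M)N$ and $μα[L]M$ merely reorganise the machine state.

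I expect the main obstacle to be this faithfulness check together with the substitution lemma $\new{x}(⟦e⟧|⟦x=V⟧)≅⟦e[V/x]⟧$ (and its μ-variable analogue) underlying the proposition. Value substitution is implemented by a persistent server $⟦x=V⟧$ — replicated because $γ$ starts with $!$ — that survives each consultation, so the correspondence holds only up to $≅$ and never up to $≡$, and one must argue that the residual copies left behind are bisimilar to the substituted term. Showing that the auxiliary forms $V⋅W$ and $μα[K]M$ neither create nor block transitions, so that the ``only if'' direction is exact and the right-to-left, argument-first discipline enforced by the $KM^f$/$KV^a$ alternation really coincides with the call-by-value order, is where the bookkeeping is heaviest.
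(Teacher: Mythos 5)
Your proposal is correct and follows the paper's own route: the theorem is simply the summary of the preceding proposition combined with the bridging identity $⟦M⟧α=⟦α*M⟧$ stated immediately before it, and the supporting details you anticipate (the substitution lemma $\new{x}(⟦e⟧|⟦x=V⟧)≅⟦e[V/x]⟧$ and its μ-variable analogue, the at-most-one-transition analysis of translated terms and continuations, and discarding replicated residuals only up to $≅$) are exactly those of the paper's proof of that proposition and its appendix. Your explicit projection from executables to call-by-value steps on λμ-terms is a mild elaboration of what the paper leaves implicit, not a different argument.
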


The case for $γ=!$ and $δ=?!$ corresponds to the system called LKQ in
Danos-Joinet-Schellinx.
Operationally, we get exactly Honda, Yoshida and Berger's
translation~\cite{bhy03:genepi,hyb04:contpi}.
The case for $γ=!$ and $δ=↑!$ is a version of this translation linearised with
respect to conclusions.
It is actually very close to Milner's encoding of call-by-value
\lam-calculus~\cite{mil90:fun}, which corresponds to the slightly more expensive
decomposition $(A→B)^*=↓(!A^*⊸↑!B^*)$.

The simplest intuitionistic version is obtained by taking $γ=δ=!$, which
is both left- and right-handed.
It is easy to check that the operational meaning of this translation is an
extension of the call-by-value strategy where functions and arguments can be
executed in parallel.
These translations are shown in table~\ref{table:translations}.

\section{Realisability interpretations} 

The previous sections define a family of type-preserving translations of the
\lm-calculus into the \pieq-calculus, and provide a detailed description of the
operational semantics induced by the translations.
Since the operational translations are deduced from simple embeddings of
intuitionistic and classical logics into linear logic, we can expect more
semantic interpretations.

The soundness of the type system we use for processes is formulated using
realisability, as described in the following section.

\subsection{Soundness of $\LLa$} 

For a finite set of names $I$, a process $p$ has interface $I$ if $\fv(P)⊆I$.
\begin{definition}
  An observation is a set $⊥$ of processes of empty interface.
  Given an observation $⊥$, two processes $p$ and $q$ of interface $I$
  are orthogonal, written $p\perp q$, if $\new{I}(p|q)∈⊥$.
  An observation $⊥$ is valid if
  \begin{itemize}
  \item $⊥$ is closed under bisimilarity,
  \item if $p$ has a unique labelled transition $p\xrightarrow{τ}p'$ and
    $p'\perp q$ then $p\perp q$.
  \end{itemize}
\end{definition}
If $\?A$ is a set of processes of interface $I$, its orthogonal is
the set $\?A^⊥:=\set{p:I}{∀q∈\?A,p\perp q}$.
A behaviour is a set $\?A$ such that $\?A=\?A^{⊥⊥}$.
The complete lattice of behaviours of interface $I$ is noted $\Beh_I$.

Let $(u_i)_{i∈\NN}$ be an infinite sequence of pairwise distinct names.
Let $\Beh_k:=\Beh_{u_1…u_k}$.
A valuation of propositional variables is a function $ρ$ that associates, to
each variable $X$ of arity $k$, a behaviour $ρ(X)∈\Beh_k$.
Given a valuation $ρ$, the interpretation of a type $A$ localised at $\vec x$,
with $|\vec x|=\ar(A)$, is the behaviour $⟦\vec x:A⟧ρ$ of interface $\vec x$
defined inductively by
  \begin{align*}
    ⟦x_1…x_n:X⟧ρ &:= v(X)[x_1/u_1,…,x_n/u_n]
\\  ⟦\vec x\vec y:A⊗B⟧ρ &:= \set{(p|q)}{p∈⟦\vec x:A⟧ρ,q∈⟦\vec y:B⟧ρ}^{⊥⊥}
\\  ⟦u:↓A⟧ρ &:= \set{u(\vec x).p}{p∈⟦\vec x:A⟧ρ}^{⊥⊥}
\\  ⟦\vec x:∃X^k.A⟧ρ &:= \textstyle
      \bigl(\bigcup_{\?X∈\Beh_k}⟦\vec x:A⟧(ρ[X:=\?X]) \bigr)^{⊥⊥}
  \end{align*}
and $⟦\vec x:A^⊥⟧ρ:=(⟦\vec x:A⟧ρ)^⊥$.
Exponential modalities require a more subtle definition:
for each name $u$, define the contraction $δ_u$ over behaviours of
interface $\{u\}$ as
\[
  δ_u(\?A) := \set{ p[u/v,w] }{ p ∈ \?A[v/u] ⅋ \?A[w/u] }^{⊥⊥}
\]
where $v$ and $w$ are fresh names.
Then, for a behaviour $\?B$ of interface $\{x_1…x_n\}$, define
$
  F_u(\?B,\?X) := \bigl( ⟦u:↑\?B⟧ ∪ \{1:u\}^⊥ ∪ δ_u(\?X) \big)^{⊥⊥}
$.
This operator is obviously monotonic in $\?X$, and the interpretation of
exponential modalities is defined as a fixed point of it:
\begin{align*}
  ⟦u:?A⟧ρ &:= \operatorname{lfp}(\?X \mapsto F_u(⟦\vec x:A⟧ρ, \?X)) &
  ⟦u:!A⟧ρ &:= (⟦u:?(A^⊥)⟧ρ)^⊥
\end{align*}
Finally, a type $Γ=\vec x_1:A_1,…,\vec x_n:A_n$ is interpreted as
\[
  ⟦Γ⟧ρ :=
  \set{ (p_1|…|p_n) }{ p_1∈⟦\vec x_1:A_1⟧ρ^⊥, …, p_n∈⟦\vec x_n:A_n⟧ρ^⊥ }^⊥
\]
\begin{definition}
  Given an observation, a process $p$ realises a type $Γ$ if $p∈⟦Γ⟧ρ$ for any
  valuation $ρ$.
  This fact is written $p⊩Γ$.
\end{definition}
From the definition of observations and the interpretation of formulas, we
easily deduce the adequacy theorem (we do not expose the proof here, a
detailed study on this technique can be found in other works by the
author~\cite{bef05:lrc,bef05:cmll}):
\begin{theorem}
  If $p⊢Γ$ is derivable, then $p⊩Γ$ for any observation $⊥$.
\end{theorem}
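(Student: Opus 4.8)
The plan is to prove the statement by induction on the derivation of $p⊢Γ$, showing that each typing rule of table~\ref{table:typing} preserves realisability: if every premise $q_i⊢Δ_i$ satisfies $q_i⊩Δ_i$, then the conclusion $p⊢Γ$ satisfies $p⊩Γ$. Unfolding the definition, $p⊩Γ$ means that for every valuation $ρ$ the process $p$ lies in $⟦Γ⟧ρ$, i.e. $p$ is orthogonal, for the ambient observation $⊥$, to every parallel composition $q_1|…|q_n$ of \emph{counter-realisers} $q_i∈⟦\vec x_i:A_i⟧ρ^⊥$. So at each inductive step I take an arbitrary such tuple of counter-realisers, compose it with the process built by the rule, hide the whole interface, and check that the resulting closed process lies in $⊥$. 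Throughout, the two validity conditions on $⊥$ — closure under bisimilarity and backward closure under a unique $τ$-transition — are what let me reason up to $≅$ and absorb the communication steps generated by cuts.

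I would dispatch the structural, multiplicative and action rules first, since they follow almost directly from the shape of the interpretation. The axiom is the copycat property: the equator $u_1\fuse v_1|…$ fuses $\vec u$ with $\vec v$ (table~\ref{table:congr}), so it turns a counter-realiser at $\vec u:X^⊥$ and one at $\vec v:X$ into a realiser and a counter-realiser on a common interface, which lie in $⊥$ by definition of the orthogonal. The $⊗$ rule is immediate from $⟦\vec x\vec y:A⊗B⟧ρ=\{p|q:…\}^{⊥⊥}$, the $⅋$ rule only rebrackets the interface and so leaves the process untouched, and exchange is trivial since a type is read as a set of localised formulas. The action rules are read off the clause $⟦u:↓A⟧ρ=\{u(\vec x).p:…\}^{⊥⊥}$ together with the duality $↑A=(↓A^⊥)^⊥$ and $⟦\vec x:A^⊥⟧ρ=(⟦\vec x:A⟧ρ)^⊥$.

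The cut rule is where the observation-validity conditions do real work: given $p∈⟦Γ,\vec x:A⟧ρ$ and $q∈⟦\vec x:A^⊥,Δ⟧ρ$, I test $\new{\vec x}(p|q)$ against an arbitrary counter-realiser tuple for $Γ,Δ$; using that $⟦\vec x:A⟧ρ$ and $⟦\vec x:A^⊥⟧ρ$ are mutually orthogonal, the synchronisations on $\vec x$ reduce everything to an observed process, and backward $τ$-closure of $⊥$ lets me pull this observation back to the process itself. The quantifier rules follow the usual pattern: $∀$-introduction uses that $⟦\vec x:∀X.A⟧ρ$ is the bi-orthogonal of the intersection over $\?X∈\Beh_k$, the side condition $X∉\fv(Γ)$ ensuring that the single realiser works uniformly in $\?X$; $∃$-introduction uses the dual union clause, and the arity constraint $\ar(B)=\ar(X)$ is exactly what makes $⟦\vec x:B⟧ρ$ an element of $\Beh_k$, hence an admissible instance in that union.

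I expect the exponential rules to be the main obstacle, precisely because $⟦u:?A⟧ρ$ is the least fixed point of $F_u$, which simultaneously packages dereliction ($↑\?B$), weakening ($\{1:u\}^⊥$) and contraction ($δ_u$). Weakening and contraction must be matched to the last two summands of $F_u$, the latter via the operator $δ_u$, built so that a realiser on one name can be shared across two. The delicate point is promotion, $!u(\vec x).p⊢?Γ,u:!A$: since the target behaviour $⟦u:!A⟧ρ=(⟦u:?A^⊥⟧ρ)^⊥$ is only characterised through a least fixed point, the natural route is an auxiliary induction over the fixed-point approximants $F_u^{(n)}$ (equivalently, the induction principle of the least fixed point), showing that the replicated server answers every counter-realiser assembled from finitely many dereliction, weakening and contraction steps, each query consuming a fresh copy of $p$ whose exponential context $?Γ$ can be duplicated to match. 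Orchestrating this approximant induction together with the replication congruence $!α.p≡α.(p|!α.p)$, while keeping everything stable under $≅$, is the technical heart of the argument.
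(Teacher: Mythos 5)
The paper never actually exposes a proof of this theorem: it asserts that adequacy is ``easily deduced'' from the definitions and defers the details to \cite{bef05:lrc,bef05:cmll}. Your proposal reconstructs essentially the argument that those works carry out: induction on the typing derivation, testing each conclusion against arbitrary tuples of counter-realisers, reducing orthogonality against a biorthogonal closure to orthogonality against its generators (since $S^{⊥⊥⊥}=S^⊥$), invoking closure of $⊥$ under bisimilarity for the axiom and structural steps, and handling promotion by fixed-point induction --- so on the level of approach there is nothing to object to, and you correctly locate the technical heart in the exponentials. Three points deserve sharpening. First, your parenthetical ``equivalently'' between iterating the approximants $F_u^{(n)}$ and the induction principle of the least fixed point is not an equivalence: $F_u$ is monotone, but since it involves a biorthogonal closure there is no reason for it to be continuous, so the $ω$-chain of approximants need not reach the least fixed point; the correct tool is the Knaster--Tarski induction principle (exhibit the set of tests answered by the replicated server as a pre-fixed point of $F_u$, which then contains the lfp), which you do name, and with which your promotion argument goes through. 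Second, the backward $τ$-closure condition on observations is really consumed by the action rules rather than by cut: for cut, once partial testing turns $p$ with its $Γ$-counter-realisers hidden into an element of $⟦\vec x:A⟧ρ$, plain orthogonality with $⟦\vec x:A^⊥⟧ρ$ already lands in $⊥$; whereas for the $↑$, $?$, $↓$ and $!$ rules the prefixed process only reaches the required configuration after the synchronisation on $u$, and the context restrictions $↑Γ$ and $?Γ$ in the $↓$ and $!$ rules are precisely what guarantees that every other parallel component is blocked on a distinct hidden name, making that synchronisation the \emph{unique} $τ$-transition so that backward closure applies --- your sketch waves at this but it is where the side conditions of the rules earn their keep. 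Third, $∃$-introduction rests on a substitution lemma $⟦\vec x:A[B/X]⟧ρ=⟦\vec x:A⟧(ρ[X:=\?B])$, with $\?B$ the behaviour interpreting $B$ at canonical names, proved by induction on $A$; the arity side condition is consumed there, slightly upstream of where you place it. None of these affects the viability of your plan; they are the places where the sketch must be tightened to become the proof the paper delegates to its references.
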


The usual notions of testing fit in our notion of observation, for instance:
\begin{proposition}
  Let $ω$ be a channel,
  assume $ω$ is not taken into account in interfaces.
  Define the must-testing observation as $\set{p}{∀p→^*q,∃q→^*ω|r}$.
  Must-testing is a valid observation.
\end{proposition}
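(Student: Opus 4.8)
The plan is to verify the two validity clauses for $⊥:=\set{p}{∀p→^*q,∃q→^*ω|r}$ separately, reducing each of them to two elementary properties: (i) that $→^*$ is matched up to $≅$, obtained by iterating the reduction-matching property of $≅$ recalled after table~\ref{table:congr}, and (ii) that the ability of a process to exhibit the success action on $ω$ is invariant under $≅$, which holds because $≅$ is the strong bisimilarity of the underlying labelled transition system and hence preserves observables. Throughout I read ``reaching a success state $ω|r$'' as reaching a state that exhibits the $ω$-barb.

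For closure under bisimilarity, suppose $p∈⊥$ and $p≅p'$; I would show $p'∈⊥$. Let $q'$ be an arbitrary reduct of $p'$. Matching $p'→^*q'$ against $p$ via $p≅p'$ produces a reduct $q$ of $p$ with $q≅q'$. Since $p∈⊥$, there is $s$ with $q→^*s$ and $s$ a success state. Matching $q→^*s$ against $q'$ gives $s'$ with $q'→^*s'$ and $s≅s'$; by invariance of the $ω$-barb, $s'$ is again a success state. Thus every reduct of $p'$ reaches success, so $p'∈⊥$. The only content is the iterated matching and the barb invariance, both furnished by the hypotheses on $≅$.

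For the second clause, write $P:=\new{I}(p|q)$ and $P':=\new{I}(p'|q)$, assume $P'∈⊥$, and aim at $P∈⊥$. The structural observation that drives the argument is that, since $p\xrightarrow{τ}p'$ is the \emph{unique} transition of $p$, the process $p$ offers no visible action and therefore can never synchronise with $q$; hence the redexes of $\new{I}(p|\tilde q)$ are exactly the single $τ$-redex of $p$ together with the internal redexes of $\tilde q$, with no cross synchronisation. It follows that along any reduction from $P$ the component $p$ stays intact until its $τ$-redex is fired: before that step the state is some $\new{I}(p|\tilde q)$ with $q→^*\tilde q$, and firing the step there yields $\new{I}(p'|\tilde q)$, which is a reduct of $P'$ because the reductions $q→^*\tilde q$ lift through the context. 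Now take any reduct $R$ of $P$. If the $τ$-redex of $p$ has been fired along $P→^*R$, then $R$ is a reduct of $P'$ and reaches success by hypothesis. Otherwise $R=\new{I}(p|\tilde q)$; firing $p$'s $τ$-step, which is legitimate because reduction is closed under parallel composition and hiding, sends $R$ to a reduct of $P'$, from which success is reachable, so $R$ reaches success as well. In both cases $R→^*ω|r$, whence $P∈⊥$ and $p\perp q$.

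The hard part will be the structural analysis underlying the second clause: making rigorous that the redexes of a parallel composition $\new{I}(p|q)$ are precisely the redex of $p$ and the internal redexes of $q$, that no $p$-$q$ synchronisation can occur, and that $p$'s redex persists under arbitrary evolution of $q$ until fired. This is exactly where the hypothesis that the unique transition of $p$ is silent is indispensable, since it is what forbids any visible interaction of $p$ with its environment; everything else is bookkeeping over the reduction-matching and barb-preservation properties of $≅$, which are either stated in the excerpt or immediate from the definition of the strong bisimilarity.
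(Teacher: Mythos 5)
The paper states this proposition without giving any proof, so there is nothing to compare your attempt against; your two-clause verification — closure under $≅$ via iterated reduction-matching plus invariance of the success observable, then the unique-$τ$-transition clause via a redex analysis of $\new{I}(p|q)$ — is the natural argument and is essentially correct. Your explicit reading of ``$q→^*ω|r$'' as reaching a state exhibiting the $ω$-barb is also the right one: under a literal syntactic reading the set would not even be closed under strong bisimilarity (a bisimilar process need only have a transition on $ω$, not the shape $ω|r$), so that reading is forced rather than optional.

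One intermediate claim does need repair, and it sits exactly in the part you deferred as ``the hard part.'' In the \pieq-calculus the redexes of $\new{I}(p|\tilde q)$ are \emph{not} exactly the $τ$-redex of $p$ together with the internal redexes of $\tilde q$: unguarded equators of $p$ can rename the subjects of $\tilde q$'s prefixes (via $x\fuse y | p[x/z] ≡ x\fuse y | p[y/z]$, or in the LTS via the promotion of a $[u\fuse v]$-transition of $\tilde q$ to $τ$ when $p|\tilde q⊨u=v$), enabling reductions that are not reductions of $\tilde q$ alone. What the unique-$τ$ hypothesis actually excludes is prefix-level synchronisation between $p$ and $q$ — including through equators, since an unguarded prefix of $p$ whose subject is unified with a free name would give $p$ a visible transition. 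So the correct invariant is: until $p$'s redex fires, every reduct has the form $\new{I}(p|\tilde q)$ where the steps leave the component $p$ syntactically intact but may have used its equators. For these steps to lift to reducts of $\new{I}(p'|q)$, as your argument requires, you then need that $p⊨x=y$ implies $p'⊨x=y$; this holds because the $τ$-step consumes two prefixes and introduces restrictions only on fresh bound names, so it can neither guard nor delete an unguarded equator. With this amendment your case analysis — redex of $p$ already fired versus still available — goes through verbatim, including the final step $\new{I}(p|\tilde q)→\new{I}(p'|\tilde q)$ landing in a reduct of $\new{I}(p'|q)$.
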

Properties of typed processes, such as termination or deadlock-freeness, can
be obtained by choosing appropriate observations. For instance:
\begin{proposition}\label{prop:deadlock}%
  Let $p⊢Γ$ be a typed process such that any propositional variable occurring
  in $Γ$ is under a modality.
  For any reduction $p→^*p'$ there is a reduction $p'→^*p''$ such that $p''$
  has a visible action.
\end{proposition}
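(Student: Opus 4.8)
The plan is to read this off from the adequacy theorem, applied to the must\nobreakdash-testing observation, by attaching to $p$ a family of \emph{probes} that convert a visible action of $p$ into a signal on the test channel $ω$. First I would fix the must-testing observation $⊥=\set{q}{∀q→^*q',∃q'→^*ω|r}$, which is valid by the preceding proposition, and invoke adequacy to obtain $p⊩Γ$, i.e. $p∈⟦Γ⟧ρ$ for every valuation $ρ$. Unfolding the interpretation of a type, this says precisely that $\new{I}(p|t_1|…|t_n)∈⊥$ for every choice of probes $t_i∈⟦\vec x_i:A_i⟧ρ^⊥=⟦\vec x_i:A_i^⊥⟧ρ$, where $I$ collects the names declared in $Γ$.

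The heart of the argument is the construction of the probes. Multiplicatives and quantifiers only split or relabel the name vectors $\vec x_i$ without introducing any prefix, so the only source of an action in $A_i^⊥$ is a modality; since every propositional variable of $Γ$ occurs under a modality, every branch of the formula passes through at least one modality before reaching a leaf. I would therefore build each $t_i$ by recursion on $A_i^⊥$ so that it first performs the action dictated by the outermost modality it meets on $\vec x_i$ and, immediately after that prefix, emits $ω$ in parallel with an arbitrary realiser of the residual type (choosing $ρ$ so that the base behaviours contain such an $ω$-emitting realiser). This is exactly where the hypothesis is used: the modality supplies a genuine input/output prefix behind which $ω$ can be guarded, so that no probe can ever signal $ω$ before synchronising on $\vec x_i$; a bare top-level variable would instead admit an action-free realiser (an equator, or a prefix-free element of $ρ(X)$) and destroy this guarantee. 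Only one probe actually needs to emit $ω$; the remaining $t_j$ may be taken to be any counter-realisers, their role being merely to witness membership in $⟦Γ⟧ρ$.

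With the probes fixed, set $C:=\new{I}(p|t_1|…|t_n)∈⊥$. Given any reduction $p→^*p'$, closure of $→$ under parallel composition and hiding lifts it to $C→^*\new{I}(p'|t_1|…|t_n)=:C'$, and since must-passing is closed under reduction we still have $C'∈⊥$, hence $C'→^*ω|r$ for some $r$. Because the names $\vec x_i$ are pairwise distinct the probes cannot interact with one another, and neither $p$ nor the initial probes contain $ω$; therefore the first appearance of $ω$ along this reduction must come from a probe firing, which by construction requires a prior synchronisation on some $\vec x_i$ between that probe and the descendant of $p'$. That synchronisation exhibits a reduction $p'→^*p''$ after which $p''$ offers a visible action on $\vec x_i$, which is the desired conclusion.

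The step I expect to be the main obstacle is the probe construction of the second paragraph, in two respects. On the semantic side, one must verify the membership $t_i∈⟦\vec x_i:A_i^⊥⟧ρ$ through the biorthogonal closures hidden in the definitions of $⊗$, $↓$ and the least-fixed-point definition of $?$, while keeping $ω$ strictly guarded behind the first prefix; in particular one has to check that adjoining an $ω$-emitter, with $ω$ outside all interfaces, leaves a realiser a realiser. On the operational side, the causal analysis closing the third paragraph must confirm that producing $ω$ in $C'$ genuinely forces a visible action of a reduct of $p'$ rather than internal activity of the probes alone; here the explicit-fusion setting requires care, since an equator inside $p'$ may mediate the synchronisation on $\vec x_i$, and one must argue that this still counts as a visible action being offered by a reduct of $p'$.
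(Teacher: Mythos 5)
Your strategy is essentially the paper's own: fix the must-testing observation, invoke adequacy, and attach modality-guarded probes that convert an action of $p$ into a signal on $ω$. One remark before the main point: the semantic verification you flag as an obstacle largely evaporates with the observation the paper uses, namely that under must-testing $ω∈⟦\vec x:A⟧ρ$ for \emph{every} formula $A$ and every valuation $ρ$ (since $ω$ lies outside all interfaces, $ω|q$ passes the test whatever $q$ does). Hence each probe can be taken to be the single prefix $\bar u^{ε}(\vec x).ω$, with polarity $ε$ read off the outermost modality governing the name $u$, and parallel composition handles both $⊗$ and $⅋$; no residual realisers and no special choice of $ρ$ are needed.

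The genuine gap is your assertion that ``because the names $\vec x_i$ are pairwise distinct the probes cannot interact with one another.'' In the \pieq-calculus this is false: two probe prefixes on distinct names $u_i$, $u_j$ with complementary polarities yield a transition labelled $[u_i\fuse u_j]$, which becomes a $τ$-step as soon as the surrounding process — i.e.\ a reduct of $p'$ — contains an equator unifying $u_i$ and $u_j$. In that scenario $ω$ is emitted while $p'$ performs no visible action at all; it merely asserts a fusion. Your closing caveat half-notices equator mediation but proposes the wrong repair: an equator-mediated synchronisation \emph{between a probe and the process} is indeed harmless (the process genuinely offers an action, merely relabelled through the fusion), but an equator-mediated \emph{probe--probe} synchronisation cannot be made to ``count as'' an action of $p'$ by any argument. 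What closes this hole in the paper's proof — and is the second, decisive use of the hypothesis — is a syntactic lemma proved by induction on typing derivations: if every propositional variable of $Γ$ occurs under a modality, then no equator of $p$ (nor of its typed reducts) relates two free names; intuitively, free-name equators originate in axioms on bare variables, which the hypothesis forces to be guarded by a modality prefix or bound by a cut. With that lemma, the probe that fires must have synchronised with an action of a reduct of $p'$, and the argument concludes; without it, your proof does not go through.
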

\begin{proof}
  We use the must-testing observation with a channel $ω$ that does
  not occur in $p$.
  Note that $ω∈⟦\vec x:A⟧$ for any formula $A$, hence $u(\vec x).ω∈⟦u:↓A⟧$.
  By similar arguments we get $u(\vec x).ω∈⟦u:!A⟧$, $\bar u(\vec x).ω∈⟦u:↑A⟧$
  and $\bar u(\vec x).ω∈⟦u:?A⟧$.
  Moreover it is clear that, for $q∈⟦A⟧$ and $r∈⟦B⟧$,
  $(q|r)∈⟦A⊗B⟧$ and $(q|r)∈⟦A⅋B⟧$.
  Each name $u_i$ occurring in $Γ$ occurs with a polarity $ε_i$ (depending on
  the modality that introduces it) and a particular arity.
  Let $t:=\prod_i\bar u_i^{ε_i}(\vec x).ω$, by the above remarks we know
  that $t∈(⟦Γ⟧ρ)^⊥$ for any valuation $ρ$.
  This implies that, for any reduction $p|t→^*p'|t$ there is a reduction
  $p'|t→^*ω|q$.
  Since $ω$ only occurs in $t$, this implies that an action in $t$
  must be triggered during this reduction.
  By induction on the typing rules, on proves that if all type variables
  occur under modalities, no equator in $p$ can relate free names,
  hence triggering an action in $t$ must be done by an action in
  a reduct of $p'$.
\end{proof}

\begin{corollary}
  The execution of a typed \lm-term in call-by-name or call-by-value always
  ends with a \lam- or μ-variable in active position.
\end{corollary}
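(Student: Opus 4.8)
The plan is to obtain the corollary as a term-level reading of Proposition~\ref{prop:deadlock}, transported through the operational correspondences of the two previous subsections. First I would note that for a typed term $Γ⊢M:A\midΔ$ the typing theorem for modal translations gives $⟦M⟧^{γδ}u⊢\bar{γ}Γ^{*⊥},u:δA^*,δΔ^*$, and that every formula of this sequent begins with one of the non-empty modalities $\bar{γ}$ or $δ$. Hence every propositional variable of the sequent occurs under a modality, so Proposition~\ref{prop:deadlock} applies: every reduct of $⟦M⟧^{γδ}u$ can be reduced further to a process exhibiting a visible action.

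Next I would tie the termination of the process to that of the term. For a fresh μ-variable $α$ we have $⟦M⟧α≡⟦M*α⟧$ in the left-handed case and $⟦M⟧α=⟦α*M⟧$ in the right-handed case, and the bisimulation propositions identify the reductions $→$ of the executable with the $τ$-transitions of its translation. Consequently an execution halts exactly when the executable reaches an $≡$-normal form $e$, and then $⟦e⟧$ has no $τ$-transition; combined with the previous paragraph, $⟦e⟧$ must already carry a visible action, the further reduction granted by Proposition~\ref{prop:deadlock} being forced to be empty.

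It remains to locate this visible action inside $e$, and this is where the genuine content sits. I would first enumerate the $≡$-normal executables: in call-by-name, absorbing applications and μ-bindings into the continuation by $≡$ and consuming $♯$ by $→$ leaves only $x*K$ with $x$ a free \lam-variable, or $λx.N*α$ with $α$ a bare continuation; dually, in call-by-value the normal forms are $α*V$ and $K*x⋅W$. In each case I would invoke the remark underlying the bisimulation proofs, that $⟦N⟧u$ and $⟦K⟧u$ admit at most one transition, either on the hidden channel $u$ or on a free variable. Avoiding a $τ$-transition on $u$ then forces the surviving action onto a free name: onto the head \lam-variable in $x*K$ and $K*x⋅W$, and, once the equator $u\fuse α$ has been absorbed into the adjacent action, onto the free μ-variable naming the continuation in $λx.N*α$ and $α*V$. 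This free variable is precisely the \lam- or μ-variable occupying the active position, which is the assertion.

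The main obstacle is this final identification. One must check that the enumeration of $≡$-normal forms is complete for each strategy, and that the unique visible action really sits at the head variable or at the active continuation and not at some incidental free name; this rests on the at-most-one-transition remark and on the fact that hiding a channel carrying only an equator $u\fuse α$ relocates the neighbouring action from the hidden name $u$ to the fused free name $α$. Beyond this bookkeeping I expect no difficulty, since deadlock-freeness already supplies the existence of the visible action and the bisimulation already supplies the exact correspondence between halting of the term and $≡$-normality of the executable.
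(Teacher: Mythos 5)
Your reduction of the corollary to Proposition~\ref{prop:deadlock} plus a case analysis of halted executables is essentially the paper's own route, and your extra bookkeeping about where the visible action sits (on the head \lam-variable, or on the free μ-variable reached once the equator $u\fuse α$ is absorbed) is sound. But there is one genuine gap: you never prove that the execution ends. The corollary asserts that the execution \emph{always ends} with a variable in active position, and your second paragraph only characterises what a halted configuration looks like, yielding the conditional statement ``if the execution halts, it halts on a variable''; nothing in your argument rules out an infinite reduction of $⟦M⟧α$. Proposition~\ref{prop:deadlock} cannot supply this, since it only promises a visible action \emph{after possibly further reduction} and says nothing about reaching an irreducible process. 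The paper closes this by a second instantiation of the realisability adequacy theorem: taking non-divergence as the observation $⊥$ (a valid observation, being closed under bisimilarity and under the unique-$τ$ anti-reduction condition), typability of $⟦M⟧α$ gives that it has no infinite reduction, and the operational correspondence propositions then transfer termination to the executable. Your proof needs this step, or some substitute for it, before the phrase ``an execution halts'' can be cashed in.

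A secondary slip, easily repaired: being $≡$-normal is not the same as being halted, since for instance $♯M*K$ in call-by-name is $≡$-normal but still admits a $→$ step; your enumeration in the third paragraph implicitly corrects this by also consuming $♯$ by $→$, but the equivalence ``halts exactly when $≡$-normal'' asserted in the second paragraph should read ``halts exactly when no $→$ step applies up to $≡$''. With the non-divergence argument added and this rephrasing, your proof coincides with the paper's: same use of deadlock-freeness, same terminal shapes $x*K$ and $λx.M*α$ in call-by-name, $K*x⋅V$ and $α*λx.M$ in call-by-value.
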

\begin{proof}
  Let $Γ⊢M:A|Δ$ be a typed \lm-term.
  Using non-divergence as the observation we can prove that $⟦M⟧α$ has no
  infinite reduction.
  Consider a reduction $⟦M⟧α→^*p$ with $p$ irreducible.
  By proposition~\ref{prop:deadlock} we deduce that $p$ must have a visible
  action, and this action can only be on $α$ or a name that occurs in $Γ$ or
  $Δ$.
  Conclude by reasoning on the shape of translations of terms: in
  call-by-name, executables with visible actions are $x*K$ or $λx.M*α$; in
  call-by-value they are $K*x⋅V$ or $α*λx.M$.
\end{proof}

\subsection{Extending the \lm-calculus} 

Realisability presents the type system $\LLa$ as an axiomatisation of the
algebra of process behaviours.
This allows for the introduction of new logical connectives and new rules: by
semantic means (i.e. by reasoning on the reductions of processes) we can
define the interpretation of a connective as an operation on sets of
processes.
If we prove the adequacy of a new logical rule, we can then use it as a typing
rule for processes with the guarantee that any property that is proved by
realisability is preserved; this includes termination and deadlock-freeness.

This technique can be used to extend the typed \lm-calculus.
As soon as a connective can be translated into $\LLa$ (possibly extended as
explained above), a translation of the underlying syntax is deduced
the same way as for the core calculus, which induces an evaluation
strategy.
This provides a framework for extending our type-preserving translations,
without loosing any of the properties of the translations.
We now provide some examples of these ideas.

\paragraph{Product types}

Products can be added to the \lm-calculus by means of a pair of constructs for
introduction and elimination:
\[
  \begin{prooftree}
    \Hypo{ Γ ⊢ M:A | Δ }
    \Hypo{ Γ ⊢ N:B | Δ }
    \Infer2{ Γ ⊢ (M,N):A×B | Δ }
  \end{prooftree}
\quad
  \begin{prooftree}
    \Hypo{ Γ ⊢ M:A×B | Δ }
    \Hypo{ Γ, x:A, y:B ⊢ N:C | Δ }
    \Infer2{ Γ ⊢ \LET{x,y=M}{N}:C | Δ }
  \end{prooftree}
\]
Given a pair $(γ,δ)$, we extend the translation of types by
$(A×B)^*=γA^*⊗γB^*$.
Note that, when $γ$ and $δ$ are not empty, the arity of $(A⊗B)^*$ is $2$,
hence polymorphism is preserved.
The translation of terms is extended as follows:
\begin{align*}
  ⟦(M,N)⟧^{γδ}u &:= δu(xy).( δ'x(v).⟦M⟧^{γδ}v | δ'y(w).⟦N⟧^{γδ}w ) \\
  ⟦\LET{x,y=M}{N}⟧^{γδ}u &:= \new{v}( ⟦M⟧^{γδ}v | \bar{δ}v(xy).⟦N⟧^{γδ}u )
\end{align*}
In both strategies, $\LET{x,y=M}{N}$ must reduce
$M$ into a pair before evaluating $N$.
The evaluation of the parts of a pair in call-by-value is done
in parallel since $δ'$ is empty.
We leave to the reader the formulation of precise evaluation rules.

\paragraph{Sum types}

Sum types in λμ can be defined as follows (with $i∈\{1,2\}$):
\begin{gather*}
  \begin{prooftree}
    \Hypo{ Γ ⊢ M:A_i | Δ }
    \Infer1{ Γ ⊢ \inj_iM:A_1+A_2 | Δ }
  \end{prooftree}
\quad
  \begin{prooftree}
    \Hypo{ Γ ⊢ M:A_1+A_2 | Δ }
    \Hypo{ Γ, x_i:A_i ⊢ N_i:C | Δ }
    \Infer2{ Γ ⊢ \CASE{M}{\inj_ix_i→N_i}:C | Δ }
  \end{prooftree}
\end{gather*}
Decomposing this in linear logic requires the additives $⊕$ and $\with$.
The general rules in $\LLa$ are complicated, but here we only need
simplified versions:
\[
  \begin{prooftree}
    \Hypo{ p ⊢ Γ, u:↑A }
    \Infer1{ p ⊢ Γ, uv:↑A⊕↑B }
  \end{prooftree}
\qquad
  \begin{prooftree}
    \Hypo{ p ⊢ Γ, \vec x:A }
    \Hypo{ q ⊢ Γ, \vec y:B }
    \Infer2{ u(\vec x).p+v(\vec y).q ⊢ Γ, uv:↓A\with↓B }
  \end{prooftree}
\]
assuming the underlying \pii-calculus has guarded choice.
We get adequacy by defining $⟦uv:A⊕B⟧ρ:=(⟦u:A⟧ρ∪⟦v:B⟧ρ)^{⊥⊥}$ and interpreting
$A\with B$ by duality.
The sum type of λμ is translated as $(A+B)^*=↑γA^*⊕↑γB^*$
(which preserves polymorphism).
The translation of terms follows:
\begin{align*}
  ⟦ \inj_iM ⟧u &:= δu(a_1a_2).↑δ'a_i(v).⟦M⟧v \\
  ⟦ \CASE{M}{\inj_ix_i→N_i} ⟧u &:= \textstyle \new{v} \bigl(
    ⟦M⟧v \bigm| \bar{δ}v(ab).\sum_i a_i(x_i).⟦N_i⟧u \bigr)
\end{align*}
Obviously, in any strategy, the evaluation of $\CASE{M}{\inj_ix_i→N_i}$ must
always reduce $M$ into an $\inj_i$ before proceeding.

\paragraph{Subtyping}

Behaviours of a given interface form a complete lattice, with intersection as
the lower bound and bi-orthogonal of the union as the upper bound.
Write $∧$ and $∨$ these dual connectives with $\ar(A∧B)=\ar(A)=\ar(B)$.
This induces subtyping over types, defined as $A≤B$ if $⟦A⟧⊆⟦B⟧$, and the
rules:
\[
  \begin{prooftree}
    \Hypo{ p ⊢ Γ, \vec x:A }
    \Hypo{ p ⊢ Γ, \vec x:B }
    \Infer2{ p ⊢ Γ, \vec x:A∧B }
  \end{prooftree}
\quad
  \begin{prooftree}
    \Hypo{ p ⊢ Γ, \vec x:A }
    \Infer1{ p ⊢ Γ, \vec x:A∨B }
  \end{prooftree}
\quad
  \begin{prooftree}
    \Hypo{ p ⊢ Γ, \vec x:A }
    \Hypo{ A ≤ B }
    \Infer2{ p ⊢ Γ, \vec x:B }
  \end{prooftree}
\]
It is clear that all connectives except negation are
increasing for this relation, and that $A≤B$ if and only if $B^⊥≤A^⊥$.
By the interpretation of modalities we also get $!A≤↓A$ and $↑A≤?A$.
Subtyping rules in λμ can be written as
\[
  \begin{prooftree}
    \Hypo{ Γ ⊢ M:A | Δ }
    \Hypo{ Γ ⊢ M:B | Δ }
    \Infer2{ Γ ⊢ M:A∩B | Δ }
  \end{prooftree}
\qquad
  \begin{prooftree}
    \Hypo{ Γ ⊢ M:A | Δ }
    \Hypo{ A ≤ B }
    \Infer2{ Γ ⊢ M:B | Δ }
  \end{prooftree}
\]
Translations are extended as $(A∩B)^*=A^*∧B^*$.
The usual subtyping rules, like $(A→B)≤(A'→B')$ if $A'≤A$ and $B≤B'$,
hold through translation.

\paragraph{Fix points}

The fact that behaviours form complete lattices also guarantees that any
increasing function over behaviours of a fixed interface have (least and
greatest) fix points.
We can thus extend $\LLa$ with dual constructs $μX.A$ and $νX.A$, with the
constraints that $\ar(X)=\ar(A)$ and that $X$ does not occur as $X^⊥$ in $A$.
The typing rules for fix points are rather technical to formulate, mainly
because the proper rule for $νX.A$ requires the introduction of a recursion
operator in the \pii-calculus.
Fix points in the types for \lm-calculus would be simply translated as
$(μX.A)^*=μX.(A^*)$.
The constraint that permits polymorphism à la system F also allows this fix
point to be used for any $A$ where $X$ only occurs positively.

\bigskip

These various extensions to the type system can be freely combined.
Other extensions, notably with concurrent primitives, could be studied in a
similar way.
However, for this purpose, it seems necessary to enforce serious linearity in
the calculus.
This fits naturally in our type system for the \pii-calculus but it
is incompatible with full control in the style we get from
translations of full classical logic.
Precise studies of this idea are deferred to further work.


\bibliographystyle{ebstyle}
\bibliography{fpp}

\vfill\pagebreak
\appendix
\section{Technical details} 

\subsection{Bisimulation in \pieq} 

A polarity $ε$ is an element of $\{↓,↑\}$.
$↓$ is called positive and $↑$ is called negative.
The notation $u^ε(\vec x)$ stands for $u(\vec x)$ if $ε=↓$ and for
$\bar u(\vec x)$ if $ε=↑$.

Two names $x$ and $y$ are unified by a process $p$ if $p⊨x=y$ is
derivable using the rules of table~\ref{table:unif}.
Note that an action like $u(x).y\fuse z$ does not unify $y$ and $z$, i.e.
the equator $y\fuse z$ is inactive as long as the action $u(x)$ has not been
consumed.
A transition can have one of three kinds of labels:
\begin{align*}
  e ::= {}&
  u^ε(x_1…x_n) &&\text{visible action (with the $x_i$ fresh and distinct)} \\
  &[u\fuse v] &&\text{conditional internal reduction} \\
  &τ &&\text{internal reduction}
\end{align*}
The notation $p⊨a=b$ is extended to transition labels as detailed in
table~\ref{table:unif}.
For a label $e$, $\names(e)$ is the set of names that occur in $e$, i.e.
$\names(u(x_1…x_n))=\{u,x_1…x_n\}$, $\names([u\fuse v])=\{u,v\}$ and
$\names(τ)=∅$.
The labelled transition system of the calculus is defined in
table~\ref{table:lts}.
\begin{table}
  Axioms and context rules for unification:
  \[
    \begin{prooftree}
      \Infer0{ x\fuse y ⊨ x=y }
    \end{prooftree}
  \qquad
    \begin{prooftree}
      \Hypo{ p ⊨ x=y }
      \Infer1{ p|q ⊨ x=y }
    \end{prooftree}
  \qquad
    \begin{prooftree}
      \Hypo{ p ⊨ x=y }
      \Infer1{ q|p ⊨ x=y }
    \end{prooftree}
  \qquad
    \begin{prooftree}
      \Hypo{ p ⊨ x=y }
      \Hypo{ z∉\{x,y\} }
      \Infer2{ \new{z}p ⊨ x=y }
    \end{prooftree}
  \]
  Reflexivity, symmetry and transitivity of equators:
  \[
    \begin{prooftree}
      \Infer0{ p ⊨ x=y }
    \end{prooftree}
  \qquad
    \begin{prooftree}
      \Hypo{ p ⊨ x=y }
      \Infer1{ p ⊨ y=x }
    \end{prooftree}
  \qquad
    \begin{prooftree}
      \Hypo{ p ⊨ x=y }
      \Hypo{ p ⊨ y=z }
      \Infer2{ p ⊨ x=z }
    \end{prooftree}
  \]
  Renaming of transition labels:
  \[
    \begin{prooftree}
      \Hypo{ p ⊨ u=v }
      \Hypo{ u,v∉\{x_1…x_n\} }
      \Infer2{ p ⊨ u^ε(x_1…x_n)=v^ε(x_1…x_n) }
    \end{prooftree}
  \qquad
    \begin{prooftree}
      \Hypo{ p ⊨ u=u' }
      \Hypo{ p ⊨ v=v' }
      \Infer2{ p ⊨ [u\fuse v]=[u'\fuse v'] }
    \end{prooftree}
  \]
  \caption{Rules for name unification.}
  \label{table:unif}
\end{table}
\begin{table}
  Actions (with $α=u^ε(x_1…x_n)$) and composition:
  \[
    \begin{prooftree}
      \Infer0{ α.p \xrightarrow{α} p }
    \end{prooftree}
  \qquad
    \begin{prooftree}
      \Infer0{ !α.p \xrightarrow{α} p | !α.p }
    \end{prooftree}
  \qquad
    \begin{prooftree}
      \Hypo{ p \xrightarrow{\bar u(x_1…x_n)} p' }
      \Hypo{ q \xrightarrow{v(x_1…x_n)} q' }
      \Infer2{ p|q \xrightarrow{[u\fuse v]} \new{x_1…x_n}(p'|q') }
    \end{prooftree}
  \]
  Renaming:
  \[
    \begin{prooftree}
      \Hypo{ p \xrightarrow{e} p' }
      \Hypo{ p ⊨ e=e' }
      \Infer2{ p \xrightarrow{e'} p' }
    \end{prooftree}
  \qquad
    \begin{prooftree}
      \Hypo{ p \xrightarrow{[u\fuse v]} p' }
      \Hypo{ p ⊨ u=v }
      \Infer2{ p \xrightarrow{τ} p' }
    \end{prooftree}
  \]
  Context:
  \[
    \begin{prooftree}
      \Hypo{ p \xrightarrow{e} p' }
      \Infer1{ p|q \xrightarrow{e} p'|q }
    \end{prooftree}
  \qquad
    \begin{prooftree}
      \Hypo{ p \xrightarrow{e} p' }
      \Infer1{ q|p \xrightarrow{e} q|p' }
    \end{prooftree}
  \qquad
    \begin{prooftree}
      \Hypo{ p \xrightarrow{e} p' }
      \Hypo{ x∉\names(e) }
      \Infer2{ \new{x}p \xrightarrow{e} \new{x}p' }
    \end{prooftree}
  \]
  \caption{Labelled transition system.}
  \label{table:lts}
\end{table}

A simulation is a relation $\?S$ over processes such that $p\?S q$ implies
that
\begin{itemize}
\item
  for any $x,y∈\Names$, $p⊨x=y$ implies $q⊨x=y$,
\item 
  for each transition $p\xrightarrow{e}p'$ there is a transition
  $q\xrightarrow{e}q'$ such that $p'\?S q'$.
\end{itemize}
A bisimulation is a relation $\?S$ such that both $\?S$ and $\?S^{-1}$ are
simulations.
Two processes $p$ and $q$ are bisimilar if there is a bisimulation $\?S$ such
that $p\?S q$.

\subsection{Simulation in call-by-name} 

For the “push” rule, we have:
\begin{align*}
  ⟦ (M)N * K ⟧
  &= \new{u}
    \Bigl( \new{z}
      \bigl( \new{v} ( ⟦M⟧v | \bar{δ}v⟨zu⟩ )
      \bigm| ⟦z=N⟧ \bigr)
    \Bigm| ⟦K⟧u
    \Bigr) \\
  &≡ \new{uvz}
    \bigl( ⟦M⟧v
    \bigm| \bar{δ}v⟨zu⟩
    \bigm| ⟦z=N⟧
    \bigm| ⟦K⟧u
    \bigr) \\
  &≡ \new{v}
    \Bigl( ⟦M⟧v
    \Bigm| \new{u}
      \bigl( \new{z} ( \bar{δ}v⟨zu⟩ | ⟦z=N⟧ )
      \bigm| ⟦K⟧u \bigr)
    \Bigr) \\
  &≡ \new{v}
    \bigl( ⟦M⟧v
    \bigm| \new{uz} ( ⟦v=Nu⟧ | ⟦K⟧u )
    \bigr) \\
  &= ⟦ M * NK ⟧
\end{align*}
For the substitution rule for continuations, consider a process
$\new{α}(⟦M⟧u|⟦K⟧α)$, with $K=M_1…M_kβ$.
When $δ$ starts with $?$, each $⟦K⟧α$ is a guarded replication on channel $α$.
By construction there is no other input on $α$ so each output on $α$ can only
interact with $⟦K⟧α$.
Hence, up to bisimilarity, we can distribute $⟦K⟧α$ in $⟦M⟧u$ by substituting
each action $\bar{α}(\vec x).p$ by $\new{α'}(\bar{α}'(\vec x).p|⟦K⟧α')$ for a
fresh $α'$.
All output occurrences of $α$ occur in processes of the form $⟦μθ[α]N⟧θ=⟦N⟧α$,
but $\new{α'}(⟦N⟧α'|⟦K⟧α')=⟦N*K⟧$ and by the previous rule we have
$⟦N*K⟧≡⟦(N)M_1…M_k*β⟧=⟦μθ[β](N)M_1…M_k⟧θ$.
By this rule we can deduce the validity of the rule for $μα[β]$:
\[
  ⟦ μα[β]M * K ⟧
  = \new{α}
    \bigl( ⟦M⟧β
    \bigm| ⟦K⟧α
    \bigr) \\
  ≅ ⟦ M[K/α]  ⟧β
  ≡ ⟦ M[K/α] * β ⟧
\]
In the intuitionistic case the rule is not applicable, but it would hold too
under the condition that each μ-variable is used linearly.
For the $♯M$ rule, we have:
\begin{align*}
  ⟦ ♯M * K ⟧
  &= \new{ux}
    \bigl( \bar{δ'}x⟨u⟩
    \bigm| δ'x(v).⟦M⟧v
    \bigm| ⟦K⟧u
    \bigr) \\
  &→ \new{ux}
    \bigl( ⟦M⟧u
    \bigm| ⟦K⟧u
    \bigm| δ'x(v).⟦M⟧v \bigr) \\
  &≅ \new{u} \bigl( ⟦M⟧u \bigm| ⟦K⟧u \bigr)
  = ⟦ M * K ⟧
\end{align*}
where $→$ contains one transition for each modality in the word $δ'$.
Since $⟦K⟧u$ and $E$ are blocked on actions that cannot be on channel $x$,
this reduction is clearly the only one possible.
The term $δ'x(v).⟦M⟧v$ is not consumed since $δ'$ must contain $!$, however
there is no other occurrence of $x$ so we can discard it by bisimilarity.

For the substitution rule for terms, the argument is the same as for
continuations.
In this case, the only outputs on the channel of a \lam-variable $x$ are of the
form $⟦x⟧u=\bar{δ}'x⟨u⟩$, hence after distribution of $⟦x=M⟧$ we get
$\new{x'}(⟦x'⟧u|⟦x'=M⟧)=⟦♯M⟧u$ for a fresh $x'$.
For the “pop” rule, we thus have
\begin{align*}
  ⟦ λx.M * NK ⟧
  &= \new{u}
    \Bigl( δu(xv).⟦M⟧v
    \Bigm| \new{w}
      \bigl( \new{z} ( \bar{δ}u⟨zw⟩ | ⟦z=N⟧ )
      \bigm| ⟦K⟧w \bigr)
    \Bigr) \\
  &≡ \new{uwx}
    \bigl( δu(xv).⟦M⟧v
    \bigm| \bar{δ}u⟨xw⟩
    \bigm| ⟦x=N⟧
    \bigm| ⟦K⟧w
    \bigr) \\
  &→ \new{uwx}
    \bigl( ⟦M⟧w
    \bigm| \bar{δ}u⟨xw⟩
    \bigm| ⟦x=N⟧
    \bigm| ⟦K⟧w
    \bigr) \\
  &≡ \new{wx}
    \bigl( ⟦M⟧w
    \bigm| \new{u} \bar{δ}u⟨xw⟩
    \bigm| ⟦x=N⟧
    \bigm| ⟦K⟧w
    \bigr) \\
  &≅ \new{wx}
    \bigl( ⟦M⟧w
    \bigm| ⟦x=N⟧
    \bigm| ⟦K⟧w
    \bigr) \\
  &≅ \new{w}
    \bigl( ⟦M[♯N/x]⟧w
    \bigm| ⟦K⟧w
    \bigr) \\
  &= ⟦ M[♯N/x] * K ⟧
\end{align*}
where $→$ contains one transition for each modality in the word $δ$.
In the classical case, $\bar{δ}u⟨xw⟩$ is not consumed since $\bar{δ}$ contains
$!$, however we know that $u$ does not occur elsewhere since all duplications
of continuations are performed by the rule for $μα[β]$, so this action becomes
inactive and it is bisimilar to the empty process.
As above, this reduction is the only one possible.

\subsection{Simulation in call-by-value} 

The substitution rule for continuations and the equivalence rule for $μα[β]$
hold by the same arguments as in the case of call-by-name.

For the first equivalence, we have
\begin{align*}
  ⟦ K * (M)N ⟧
  &= \new{u}
    \bigl( ⟦u=K⟧
    \bigm| \new{v}( ⟦v=uM^f⟧ | ⟦N⟧v )
    \bigr) \\
  &≡ \new{v}
    \bigl( \new{u}( ⟦u=K⟧ | ⟦v=uM^f⟧ )
    \bigm| ⟦N⟧v
    \bigr) \\
  &= ⟦ KM^f * N ⟧
\end{align*}
For the first reduction rule, we have
\begin{align*}
  ⟦ KM^f * V ⟧
  &= \new{u}
    \Bigl( \new{v}
      \bigl( \bar{γ}'u(x).\new{w}( ⟦M⟧w | \bar{δ}w⟨xv⟩ )
      \bigm| ⟦v=K⟧ \bigr)
    \Bigm| γ'u(x).⟦x=V⟧
    \Bigr) \\
  &≡ \new{uv}
    \bigl( \bar{γ}'u(x).\new{w}( ⟦M⟧w | \bar{δ}w⟨xv⟩ )
    \bigm| ⟦v=K⟧
    \bigm| γ'u(x).⟦x=V⟧
    \bigr) \\
  &→ \new{uvx}
    \bigl( \bar{γ}'u(x).\new{w}( ⟦M⟧w | \bar{δ}w⟨xv⟩ ) \\&\qquad
    \bigm| \new{w}( ⟦M⟧w | \bar{δ}w⟨xv⟩ )
    \bigm| ⟦v=K⟧
    \bigm| ⟦x=V⟧
    \bigr) \\
  &≡ \new{w}
    \bigl( \new{u} \bar{γ}'u(x).\new{w}( ⟦M⟧w | \bar{δ}w⟨xv⟩ ) \\&\qquad
    \bigm| ⟦M⟧w
    \bigm| \new{vx}( ⟦x=V⟧ | \bar{δ}w⟨xv⟩ | ⟦v=K⟧ )
    \bigr) \\
  &≅ ⟦ KV^a * M ⟧
\end{align*}
where $→$ contains one transition for each modality in the word $γ'$.
In the classical case $\bar{γ}'$ contains $!$ so the continuation at $u$ is
not consumed, however we know that $u$ has no other occurrence since
continuations are duplicated by the rule for $μ$, so we can erase the
residual term on $u$ by bisimilarity.
This is the only possible reduction as soon as $γ'$ is not empty.
The second reduction rule is deduced as
\begin{align*}
  ⟦ KW^a * V ⟧
  &= \new{u}
    \bigl( \new{vx}( ⟦x=W⟧ | \bar{δ}u⟨xv⟩ | ⟦v=K⟧ )
    \bigm| γ'u(z).⟦z=V⟧
    \bigr) \\
  &≡ \new{uvx}
    \bigl( ⟦x=W⟧
    \bigm| \bar{δ}u⟨xv⟩
    \bigm| ⟦v=K⟧
    \bigm| γ'u(z).⟦z=V⟧
    \bigr) \\
  &→ \new{uvx}
    \bigl( ⟦x=W⟧
    \bigm| \bar{γ}z⟨xv⟩
    \bigm| ⟦v=K⟧
    \bigm| ⟦z=V⟧
    \bigr) \\
  &≡ ⟦ K * V⋅W ⟧
\end{align*}
where $→$ contains one transition for each modality in the word $γ'$, since
$δ=γ'γ$.
As above, this is the only reduction.
For the substitution rule, we have
\begin{align*}
  ⟦ K * λx.M⋅V ⟧
  &= \new{u}
    \bigl( ⟦u=K⟧
    \bigm| \new{vz}( γv(xw).⟦M⟧w | ⟦z=V⟧ | \bar{γ}v⟨zu⟩ )
    \bigr) \\
  &→ \new{u}
    \bigl( ⟦u=K⟧
    \bigm| \new{x}( ⟦M⟧u | ⟦x=V⟧ )
    \bigr) \\
  &≅ \new{u}
    \bigl( ⟦u=K⟧
    \bigm| ⟦M[V/x]⟧u
    \bigr) \\
  &= ⟦ K * M[V/x] ⟧
\end{align*}
where there is one transition for each modality in $γ$.
The step after the reduction is an instance of the substitution lemma
$\new{x}(⟦M⟧u|⟦x=V⟧)≅⟦M[V/x]⟧$.
This lemma holds by the same argument as in the case of call-by-name: the
binding $⟦x=V⟧$ can be distributed to all occurrences of $x$, but any occurrence
of $x$ occurs in a binding $⟦y=x⟧$ so we have
\begin{multline*}
  \new{x}( ⟦y=x⟧ | ⟦x=M⟧ )
  = \new{x}( y\fuse x | ⟦x=M⟧ ) \\
  ≡ \new{x}(y\fuse x) | ⟦y=M⟧
  ≅ ⟦y=M⟧
\end{multline*}
using the obvious bisimilarity $\new{x}(x\fuse y)≅1$.

\end{document}